\renewcommand{\algorithmicrequire}{ \textbf{Input:}} 
\newcolumntype{C}[1]{>{\centering\arraybackslash}p{#1}}
\newtheorem{definition}{Definition}
\newtheorem{theorem}{Theorem}
\newtheorem{lemma}{Lemma}
\newtheorem{corollary}{Corollary}
\newtheorem{remark}{Remark}
\newtheorem{example}{Example}
\newtheorem{proposition}{Proposition}
\newtheorem*{proof}{Proof}
\begin{document}

\title{Supervisory Control of Quantum Discrete Event Systems
}

\author{Daowen Qiu
    \thanks{Institute of Quantum Computing and Computer Theory,  School of Computer Science and Engineering, Sun Yat-sen
     University, Guangzhou, 510006, China (e-mail: issqdw@mail.sysu.edu.cn).   }
}

\date{ }
\maketitle


{\bf Abstract.}
 Discrete event systems (DES) have been  deeply developed and applied in practice, but state complexity in DES still is an important problem to be better solved with innovative methods. With the development of quantum computing and quantum control,  a natural problem is to simulate DES by means of quantum computing models and to establish {\it quantum DES} (QDES).  The motivation is twofold:  on the one hand, QDES have potential applications when DES are simulated and processed by quantum computers, where quantum systems are employed to simulate the evolution of states driven by discrete events,
 and on the other hand, QDES may have essential advantages over DES concerning state complexity for imitating some practical problems.
So, the goal of this paper is  to establish a basic framework of QDES by using {\it quantum finite automata} (QFA)  as the modelling formalisms, and the supervisory control theorems of QDES are established and proved. Then we present a polynomial-time algorithm to decide whether or not the controllability condition holds. In particular, we construct a number of new examples of QFA to illustrate the supervisory control of QDES and to verify the essential advantages of QDES over classical DES in state complexity.


\vskip 2mm
\noindent

{\bf Key words:} Discrete Event Systems, Quantum Finite Automata, Quantum Computing, Supervisory Control, State Complexity.

\vskip 2mm
\noindent

{\bf AMS subject classification.}  93C65, 81P68, 93B05,  68Q45

\vskip 2mm



\section{Introduction}

Discrete Event Systems (DES)  and Continuous-Variable Dynamic Systems (CVDS) are two important  classes of control systems \cite{book1}. Roughly speaking, the goal of systems to be controlled is to achieve some desired specifications, and feedback control means  using any available information from the system's
behavior  to adjust the control's input \cite{book1}. CVDES are time-varying dynamic systems and their state transitions are time-driven, but  the state transitions of DES are event-driven.
In general, the study of CVDS relies on differential-equation-based models, and DES  are simulated usually by automata and Petri nets \cite{book1,book2}.

 More exactly, DES are formally dynamical systems whose states are discrete and the evolutions of its states are driven by the occurrence of events \cite{RW87,book1,book2}. As Kornyak mentioned \cite{Kornyak13}, the study of discrete systems is also important from the practical point of view since many physical objects are in fact discrete. 
As a precise model in logic level, DES have been applied to many real-world systems, such as traffic systems, manufacturing systems, smart grids systems, database management systems, communication protocols, and logistic (service) systems, etc. However, for some practical systems modeled by large-scale states \cite{book2}, the complexity of processing systems still needs to be solved appropriately.

Supervisory Control Theory (SCT) of DES  is a basic and important subject in DES \cite{RW87,book1,book2}, and it was originally proposed by Ramadge and Wonham \cite{RW87,book1,book2}. A DES and the control specification are modeled as automata. 
The task of supervisors is to ensure that the supervised (or closed-loop) system generates a certain language called specification language.

SCT of DES  exactly supports the formulation of various control problems of standard types, and it usually is automaton-based. Briefly, a DES is modeled as the generator (an automaton) of a formal language, and certain events (transitions) can be
disabled by an external controller. The idea is to construct this controller so that the
events it currently disables depend on the past behavior of the
 DES in a suitable way.

Automata form the most basic class of DES models \cite{RW87,book1,book2}. They are intuitive, easy to use,
amenable to composition operations, and amenable to analysis as well (in the finite-state
case). However, the (conventional) DES model cannot characterize the probability  of probabilistic systems and the possibility of fuzzy systems that
exist commonly in engineering field and the real-world problems with fuzziness, impreciseness, and
subjectivity. So, probabilistic DES and fuzzy DES were proposed \cite{Lin90, LW1993, PPL2009,   LY02,Qiu05,DYQ19}.
However, to my best knowledgement, the state complexity problems still  have  not been studied in  probabilistic and fuzzy DES, and QFA have certain  advantages over {\it probabilistic finite automata} (PFA) in state complexity for some problems, so, with the development of quantum computing \cite{NC00},  how to establish {\it quantum DES} (QDES) and show certain advantages of state complexity  are a pending problem, and  this is the goal of the paper.



Quantum computers were first conceived by Benioff \cite{Benioff80} and Feynman \cite{Feynman82} in the early of 1980s, and in particular, Feynman \cite{Feynman82} indicated it needs exponential time to simulate the evolution of quantum systems in classical computers but quantum computers can perform efficient simulation. In 1985, Deutsch \cite{Deutsch85} elaborated and formalized Benioff and Feynman's ideas  by defining the notion of quantum Turing machines, and proposed  Quantum Strong Church-Turing Thesis: \textit{A quantum Turing machine can efficiently simulate any realistic model of computation},  which is an extension of the traditional Strong  Church-Turing Thesis:\textit{A probabilistic Turing machine can efficiently simulate any realistic model of computation}. In a way,  this also inspires  to establish QDES since  it is  natural to develop  control systems from classical  models  to quantum ones. 

In fact, after Shor's  discovery \cite{Shor94} of a polynomial-time algorithm on quantum computers for prime factorization, quantum computation has become a very active research area in quantum physics, computer science, and quantum control
\cite{NJP09,WM10, Lloyd2000, NWCL2000}.
 The study of quantum control usually has been focused on time-varying systems, and coherent feedback control (i.e., feedback control using a fully quantum system) has been deeply investigated \cite{WM10, Lloyd2000, NWCL2000}. 

Classical automata consist of the most fundamental class of DES models \cite{book1,book2}, but they may lead to large-scale state spaces when modeling complex systems \cite{PC05}.
Though there are  strategies to attack the problem of large state spaces \cite{book2,PC05}, we still hope to discover new methods for solving the state complexity from a different point of view. In fact, quantum finite automata (QFA) can be employed as a powerful tool, since QFA have significant advantages over crisp finite automata concerning state complexity \cite{AF98,AY2021}. An excellent and comprehensive survey on QFA was presented by  Ambainis and Yakaryilmaz \cite{AY2021}. Moreover,  QFA have been studied in physical experiment  \cite{MP20,  PHYF2022, TFLZZ2019}, and in particular, recently Plachta et al. \cite{PHYF2022} demonstrated an experimental implementation of multiqubit QFA using the orbital angular momentum (OAM) of single photons, and  showed that a high-dimensional QFA can  outperform classical finite automata in terms of the required memory in a way.

Actually, QFA have been interestingly applied to interactive proof systems \cite{NY2009}, in which QFA are verifiers. In addition,  QFA have been used to simulate chemical reactions with certain advantages of time complexity \cite{BZ2020}.





Since QFA have better advantages over classical finite automata in state complexity \cite{AF98}, QDES likely can solve such problems with essential advantages of states complexity over DES. 
Therefore, the purpose of this paper is to initial the study of QDES. 
As classical DES are modeled by one-way (probabilistic or fuzzy) finite automata \cite{RW87,book1,book2,     Lin90, LW1993, PPL2009,  LY02,Qiu05,DYQ19},    we would employ one of one-way QFA (1QFA) to simulate QDES. 

QFA can be thought of as a theoretical model of quantum computers in which the memory is finite and described by a finite-dimensional state space \cite{AY2021, QLMG12,Gruska99,BK19}.     \textit{Measure-once} 1QFA (MO-1QFA) were initiated by Moore and Crutchfield \cite{MC00} and  \textit{measure-many } 1QFA (MM-1QFA) were studied first by Kondacs and Watrous \cite{KW97}, where ``1" means ``one-way", that is, the tape-head moves only from left side to right side.  MO-1QFA and MM-1QFA were deeply studied by Ambainis and Freivalds \cite{AF98}, Brodsky and Pippenger \cite{BP02}, and other authors (see, e.g.,  \cite{AY2021}). Then other 1QFA   were also proposed and studied by Ambainis {\it et al.},  Nayak,  Hirvensalo,  Yakaryilmaz and Say, Paschen, Ciamarra,  Bertoni {\it et al.}, Qiu and Mateus {\it et al.} as well other authors (e.g., the references in \cite{AY2021,QLMG12,BK19}). These 1QFA include \cite{AY2021}: Latvian-1QFA (La-1QFA), Nayak-1QFA (Na-1QFA), General-1QFA (G-1QFA), 1QFA with ancilla qubits (1QFA-A), fully 1QFA (Ci-1QFA), 1QFA with control languages (1QFA-CL), and \textit{one-way quantum finite automata together with classical states} (1QFAC), where G-1QFA, 1QFA-A, Ci-1QFA, 1QFA-CL, and 1QFAC can recognize all regular languages with bounded error.



Qiu {\it et al.} \cite{QLMG12, MQL12, QLZMG11, LQ08}
 studied the decision problems regarding equivalence of QFA and minimization of states of QFA, where the equivalence method will be utilized in this paper for checking a controllability condition. 





In general,  the languages for simulating DES (and QDES) should be prefix-closured and regular    \cite{RW87,book1,book2}, but we will prove that MO-1QFA  cannot  recognize with cut-point (i.e., unbounded error) any prefix-closured regular language. So, MO-1QFA are not suitable for simulating QDES.  In addition, it is more complete if the 1QFA  modeling QDES can recognize all regular languages and such 1QFA are relatively concise. Actually,  1QFAC proposed in \cite{QLMS15} is a hybrid of MO-1QFA and \textit{deterministic finite automata} (DFA), and both MO-1QFA and DFA are two special models of 1QFAC. 1QFAC can also recognize all regular languages and the computing procedure of 1QFAC are concise in a way. So, in this paper, we would like to use 1QFAC to model QDES.

\par
The remainder of the paper is organized as follows.
In Section \ref{SECQFA} we first introduce the basics of quantum computing and then present the definitions of QFA (MO-1QFA and 1QFAC; DFA and PFA are also recalled) and related properties, as well as we recall the decidability method of equivalence for 1QFAC. In Section \ref{SECQDES}, we first recollect the  supervisory control of DES (the language and automaton models of DES and related parallel composition operation), then we  present QDES and corresponding  supervisory control formalization of QDES (we define QDES by means of QFA, define quantum supervisors, and the supervisory control of QDES is formulated); parallel composition of QDES and related properties are also given.
 In Section \ref{SECQSCT}, we first prove that MO-1QFA can not recognize any prefix-closured language, but, as we know \cite{QLMS15}, 1QFAC  can do it; also  a number of new prefix-closured languages are constructed to show 1QFAC's state complexity advantages over PFA and DFA. Therefore 1QFAC  are employed to simulate QDES. Then we establish a number of supervisory control theorems of QDES, and in particular, the new examples are given to illustrate the supervisory control dynamics of QDES, and to verify the advantages of QDES over DES concerning state complexity. In Section \ref{SECDCC}, we give a method to determine the control condition of QDES. More specifically, the detailed polynomial-time algorithm for testing the existence of supervisors is provided. Finally in Section \ref{SECCR}, we summarize the main results we obtain and mention related  problems for further   developing QDES.

\section {One-way Quantum Finite Automata (1QFA)} \label{SECQFA}

In this section we serve to review the definitions of MO-1QFA and 1QFAC   together with related properties, and  the decidability method of equivalence for 1QFAC is recalled. 
In the interest of readability, we first recall some basics of quantum computing that we will use in the paper. For the details concerning quantum computing, we can refer to \cite{NC00}, and here we just
briefly introduce some notation to be used in this paper.


\subsection{ Some notation on  quantum computing}

Let $\mathbb{C}$ denote the set of all complex numbers, $\mathbb{R}$
 the set of all real numbers, and $\mathbb{C}^{n\times m}$
 the set of $n\times m$ matrices having entries in $\mathbb{C}$. Given two matrices $A\in \mathbb{C}^{n\times m}$ and $B\in\mathbb{C}^{p\times q}$, their {\it tensor product} is the $np\times
mq$ matrix, defined as \[A\otimes B=\begin{bmatrix}
  A_{11}B & \dots & A_{1m}B \\
  \vdots & \ddots & \vdots \\
  A_{n1}B &\dots & A_{nm}B \
\end{bmatrix}.\]
  We get $(A\otimes B)(C\otimes D)=AC\otimes BD$ if the operations $AC$ and $BD$ can be done in terms of multiplication of matrices.

  Matrix $M\in\mathbb{C}^{n\times n}$ is said to be {\it
unitary} if $MM^\dagger=M^\dagger M=I$, where
 $\dagger$ denotes conjugate-transpose operation. $M$ is said to be {\it
Hermitian} if $M=M^\dagger$. For $n$-dimensional row vector
$x=(x_1,\dots, x_n)$, its norm denoted by $||x||$ is defined as
$||x||=\big(\sum_{i=1}^n x_ix_i^{*}\big)^{\frac{1}{2}}$, where
symbol $*$ denotes conjugate operation.  Unitary matrices preserve
the norm, i.e., $||xM||=||x||$ for each  $x\in \mathbb{C}^{1\times
n}$ and unitary matrix $M\in\mathbb{C}^{n\times n}$.


Any quantum system can be described by a state of Hilbert space. More specifically,
for a
quantum system with a finite basic state set $Q=\{q_1,\dots,
q_n\}$, every basic state $q_i$ can be represented by an
$n$-dimensional row vector $\langle q_i|=(0\dots1\dots0)$ having
only 1 at the $i$th entry (where $\langle \cdot|$ is
Dirac notation, i.e., bra-ket notation). At any time, the state of this system
is a {\it superposition} of these basic states and can be
represented by a row vector $\langle \phi|=\sum_{i=1}^nc_i\langle
q_i|$ with $c_i\in\mathbb{C}$ and $\sum_{i=1}^n|c_i|^2=1$; $|\phi\rangle$ represents the  conjugate-transpose of $\langle \phi|$. So, the quantum system is described by the Hilbert space ${\cal H}_Q$ spanned by the base $\{|q_i\rangle: i=1,2,\dots,n\}$, i.e. ${\cal H}_Q=\text{span}\{| q_i\rangle: i=1,2,\dots,n\}$.

The evolution of quantum system's states complies with unitarity.  More exactly,  suppose the current state of system is $|\phi\rangle$. If it is acted on by some unitary matrix $M_1$,  then the system's state is changed to the new current state $M_1|\phi\rangle$;  the second unitary matrix, say $M_2$, is acted on  $M_1|\phi\rangle$, the state is further changed to  $M_2 M_1|\phi\rangle$.
So, after  a series of unitary matrices $M_1, M_2, \ldots, M_k$ are performed in sequence, the system's state becomes $M_kM_{k-1}\cdots M_1|\phi\rangle$.

If we want to get some information from a quantum system, then  a measurement is made on its current state. Here we consider {\it projective measurement} (i.e., von Neumann measurement). A projective measurement is described by an {\it observable}  that is  a Hermitian matrix ${\cal O}=c_1P_1+\dots +c_s P_s$, where $c_i$ is its eigenvalue and, $P_i$ is the projector onto the eigenspace corresponding to $c_i$. In this case, the projective measurement of ${\cal O}$ has result set $\{c_i\}$ and projector set $\{P_i\}$. For example, given state $|\psi\rangle$ is made by the measurement ${\cal O}$, then the probability of obtaining result $c_i$ is $\|P_i|\psi\rangle\|^2$ and the state $|\psi\rangle$ collapses to $\frac{P_i|\psi\rangle}{\|P_i|\psi\rangle\|}$.

\subsection{Definitions of 1QFA}

For non-empty set
$\Sigma$, by $\Sigma^{*}$ we mean the set of all finite length
strings over $\Sigma$, and $\Sigma^n$ denotes the set of all
strings over $\Sigma$ with length $n$. For $u\in \Sigma^{*}$,
$|u|$ is the length of $u$; for example, if
$u=x_{1}x_{2}\ldots x_{m}\in \Sigma^{*}$ where $x_{i}\in \Sigma$,
then $|u|=m$.  For set $S$,  $|S|$ denotes the cardinality of $S$. First we recall the definitions of {\it deterministic finite automata} (DFA) and probabilistic finite automata (PFA).

\subsubsection{DFA and PFA} \label{DFAPFA}

A DFA \cite{HU79} can be described by a five-tuple ${\cal A}=(Q,\Sigma,\delta,q_{0},Q_{a})$,
where $Q$ is the finite set of states; $\Sigma$ is a finite alphabet
of input; $\delta: Q\times \Sigma\rightarrow Q$ is the
transition function (In what follows, ${\cal P}(X)$
represents the power set of set $X$.); $q_{0}\in Q$ is the initial state; and
$Q_{a}\subseteq Q$ is called the set of accepting (or called as ``marked" in DES) states. Indeed,
transition function $\delta$ can be naturally extended to
$Q\times\Sigma^{*}$ in the following manner: For any $q\in Q$, any
$s\in\Sigma^{*}$ and $\sigma\in\Sigma$,
$
\delta(q,\epsilon)=\epsilon,\hskip 2mm {\rm and} \hskip 2mm
\delta(q,s\sigma)=\delta(\delta(q,s),\sigma).
$

The language recognized by ${\cal A}$ is as $\{w\in\Sigma^*: \delta(q_0,w)\in Q_a\}$.
We can depict it as Fig. \ref{fig:DFA}.


\begin{figure}[htbp]
\centering
\includegraphics[width=0.9\textwidth]{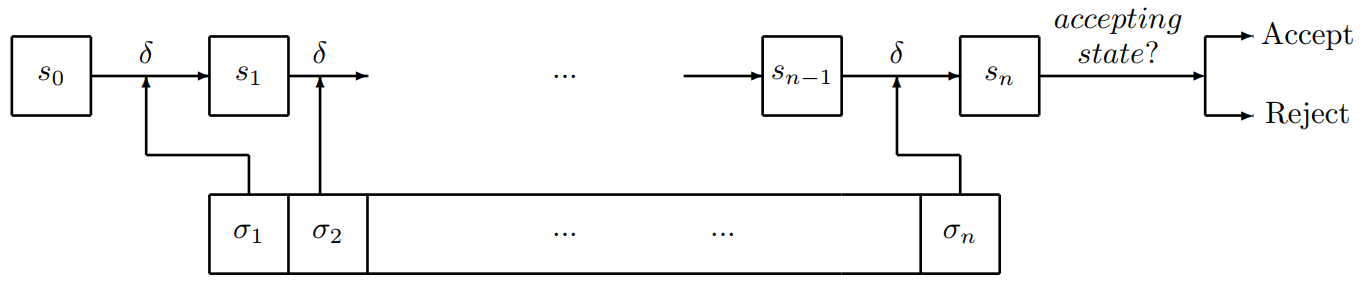}
\caption{The dynamics of DFA.}
\label{fig:DFA}
\end{figure}

A PFA \cite{Paz71} is also defined  as a five-tuple
 $\mathcal{M}=(S,\Sigma,\pi, \{M(\sigma)\}_{\sigma\in\Sigma},\eta)$, where
 $S=\{s_{1},s_{2},\cdots,s_{n}\}$ is a finite set of states; $\Sigma$ is a finite alphabet
of input;
 $\pi$ is an $n$-dimensional row vector, denoted as an initial distribution over $S$;
 $\forall \sigma\in\Sigma$, $M(\sigma)$ is an $n\times n$ random matrix (i.e., each row is a vector of transfer probabilities) and $M(\sigma)(i,j)$ denotes the probability of transferring from state $s_{i}$ to state $s_{j}$ after the machine $\mathcal{M}$ reads $\sigma$;
 $\eta=\begin{bmatrix}\eta_{1}, \cdots, \eta_{n} \end{bmatrix}^T$ is an $n$-dimensional column vector with elements $0$ or $1$, and $\eta_{i}=1$ means $s_{i}$ is an accepting state. For any input string $x=x_{1}x_{2}\cdots x_{n}\in \Sigma^{*}$, the probability that $\mathcal{M}$ accepts $x$ is 
\begin{equation}
P_{\mathcal{M}}(x)=\pi M(x_{1}) M(x_{2})\cdots M(x_{n})\eta.
\end{equation}




\subsubsection{MO-1QFA and 1QFAC}

MO-1QFA   are the simplest
quantum computing models proposed first by Moore and Crutchfield \cite{MC00}. In this model, the transformation on any
symbol in the input alphabet is realized by a unitary operator. A
unique measurement is performed at the end of a computation. More formally, an MO-1QFA with $n$ states and the input alphabet $\Sigma$ is a five-tuple \begin{equation}{\cal M}=(Q, |\psi_0\rangle,
\{U(\sigma)\}_{\sigma\in \Sigma}, Q_a,Q_r) ,\end{equation} where
$Q=\{|q_1\rangle,\dots,|q_n\rangle\}$ consist of  an orthonormal base that
spans a Hilbert space ${\cal H}_Q$;
 $|\psi_0\rangle\in {\cal H}$ is the initial state;
 for any $\sigma\in \Sigma$, $U(\sigma)\in \mathbb{C}^{n\times n}$ is a unitary matrix;
 $Q_a, Q_r\subseteq Q$ with $Q_a\cup Q_r=Q$ and $Q_a\cap Q_r=\emptyset$ are the accepting and rejecting states, respectively, and they describe an observable  by using the projectors
$P(a)=\sum_{|q_i\rangle\in Q_a}|q_i\rangle\langle q_i|$ and $P(r)=\sum_{|q_i\rangle\in Q_r}|q_i\rangle\langle q_i|$, with the result set $\{a,r \}$ of which `$a$'
and `$r$' denote ``accept'' and ``reject'', respectively. Here $Q$ consists of accepting and rejecting sets.

Given an MO-1QFA ${\cal M}$ and an input word $s=x_1\dots x_n\in\Sigma^{*}$, then starting from $|\psi_0\rangle$, $U(x_1),\dots, U(x_n)$ are applied in succession, and at the end of the word, a measurement $\{P(a),P(r)\}$ is performed with the result that ${\cal M}$ collapses into accepting states or rejecting states with
corresponding probabilities. Hence, the probability $L_{{\cal M}}(x_1\dots x_n)$ of ${\cal M}$ accepting $w$ is defined as:
\begin{align}
L_{\cal M}(x_1\dots x_n)=\|P(a)U_s|\psi_0\rangle\|^2\label{f_MO},
\end{align}
where we denote $U_s=U_{x_n}U_{x_{n-1}}\cdots U_{x_1}$. MO-1QFA can be depicted as Figure \ref{fig:mo1qfadyn}, in which if these unitary transformations are replaced by stochastic matrices and some stochastic vectors take place of
quantum states, then it is a PFA \cite{Paz71}.

\vskip 0mm

\begin{figure}[htbp]
\centering
\includegraphics[width=0.9\textwidth]{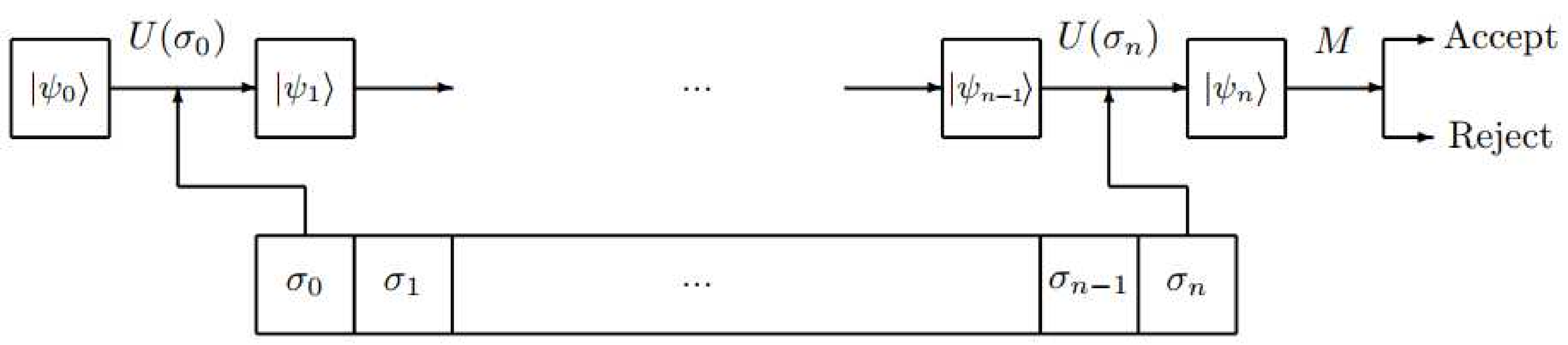}
\caption{MO-1QFA dynamics as an acceptor of languages.}
\label{fig:mo1qfadyn}
\end{figure}

Next we introduce 1QFAC proposed in \cite {QLMS15}.
A 1QFAC ${\cal A}$ \cite{QLMS15} is formally defined by a 8-tuple
\[
{\cal M}=(S,Q,\Sigma, s_{0},|\psi_{0}\rangle, \delta,
\mathbb{U}, {\cal P}),
\]
where:
\begin{itemize}
\item $\Sigma$  is a finite set (the {\it input alphabet});

\item $S$ is a finite set (the set of {\em classical states});

\item $Q$ is a finite set (the set of  {\em quantum  basis states});

\item $s_{0}$ is an element of $S$ (the {\em initial classical state});

\item $|\psi_{0}\rangle$ is a unit vector in the Hilbert space ${\cal H}(Q)$ (the {\em initial quantum state});

\item $\delta: S\times \Sigma\rightarrow S$ is a map (the {\em classical transition map});

\item $\mathbb{U}=\{U_{s\sigma}\}_{s\in S,\sigma\in \Sigma}$ where $U_{s\sigma}:{\cal H}(Q)\rightarrow {\cal H}(Q)$ is a unitary operator for each $s$ and $\sigma$ (the {\em quantum transition operator} at $s$ and $\sigma$);

\item ${\cal P}=\{{\cal P}_s\}_{s\in S}$  where each ${\cal P}_s$ is a projective measurement over ${\cal H}(Q)$ with outcomes  {\it accepting} (denoted by $a$) or {\it rejecting} (denoted by $r$)  (the {\em measurement operator at} $s$).
\end{itemize}

Hence, each ${\cal P}_s= \{P_{s,a}, P_{s,r}\}$ such that  $P_{s,a}+P_{s,r}=I$ and $P_{s,a}P_{s,r}= O$.
Furthermore, if the machine is in classical state $s$ and quantum state $|\psi\rangle$ after reading the input string, then  $\|P_{s,\gamma}|\psi\rangle\|^{2}$ is the probability of the machine producing outcome $\gamma$ on that input.

$\delta$ can be extended to a map $\delta^{*}:
\Sigma^{*}\rightarrow S$ as usual. That is,
$\delta^{*}(s,\epsilon)=s$; for any string $x\in\Sigma^{*}$ and
any $\sigma\in \Sigma$, $\delta^{*}(s,\sigma x)=
\delta^{*}(\delta(s,\sigma),x)$.
For the sake of convenience, we denote the map $\mu : \Sigma^{*}
\rightarrow S$, induced by $\delta$, as
$\mu(x)=\delta^{*}(s_{0},x)$ for any string $x\in\Sigma^{*}$.
We further describe the computing process of ${\cal
A}$ for input string
$x=\sigma_{1}\sigma_{2}\cdots\sigma_{n}$ where $\sigma_{i}\in
\Sigma$ for $i=1,2,\cdots,n$.

The machine ${\cal A}$ starts at the
initial classical state $s_{0}$ and initial quantum state
$|\psi_{0}\rangle$. On reading the first symbol $\sigma_{1}$ of the input string, the states of the machine change as follows: the classical state becomes
$\mu(\sigma_{1})$; the quantum state becomes $U_{s_{0}\sigma_{1}}|\psi_{0}\rangle$.
Afterward, on reading $\sigma_{2}$, the machine changes its classical state to $\mu(\sigma_{1}\sigma_{2})$ and its quantum state to the result of applying $U_{\mu(\sigma_{1})\sigma_{2}}$ to
$U_{s_{0}\sigma_{1}}|\psi_{0}\rangle$.

The process continues
similarly by reading
$\sigma_{3}$, $\sigma_{4}$, $\cdots$, $\sigma_{n}$ in succession.
Therefore, after reading $\sigma_{n}$, the classical state becomes
$\mu(x)$ and the quantum state is as follows:
\begin{align}
U_{\mu( \sigma_{1}\cdots \sigma_{n-2}\sigma_{n-1} )\sigma_{n}}U_{\mu(\sigma_{1}\cdots \sigma_{n-3}\sigma_{n-2})\sigma_{n-1}}\cdots
 U_{\mu(\sigma_{1})\sigma_{2}}
U_{s_{0}\sigma_{1}}|\psi_{0}\rangle.
\end{align}

Let ${\cal U}(Q)$ be the set of unitary operators on Hilbert space
${\cal H}(Q)$. For the sake of convenience, we denote the map
$v:\Sigma^{*}\rightarrow {\cal U}(Q)$ as: $v(\epsilon)=I$ and

\begin{equation}
v(x)=
U_{\mu( \sigma_{1}\cdots \sigma_{n-2}\sigma_{n-1} )\sigma_{n}}U_{\mu(\sigma_{1}\cdots \sigma_{n-3}\sigma_{n-2})\sigma_{n-1}}
\cdots U_{\mu(\sigma_{1})\sigma_{2}}
U_{s_{0}\sigma_{1}} \label{v}
\end{equation}
for $i=1,2,\cdots,n$, and $I$ denotes the identity
operator on ${\cal H}(Q)$, indicated as before.

By means of the denotations $\mu$ and $v$, for any input string
$x\in\Sigma^{*}$, after ${\cal A}$ reading $x$, the classical
state is $\mu(x)$ and the quantum state is $v(x)|\psi_{0}\rangle$.

Finally,   we obtain the
probability $L_{{\cal M}}(x)$ for accepting $x$:
\begin{equation}
L_{{\cal M}}(x)= \|P_{\mu(x),a}v(x)|\psi_{0}\rangle\|^{2}.
\end{equation}

1QFAC can be depicted as Figure \ref{fig:mm1qfadyn}.


\begin{figure}[htbp]
\centering
\includegraphics[width=0.9\textwidth]{ 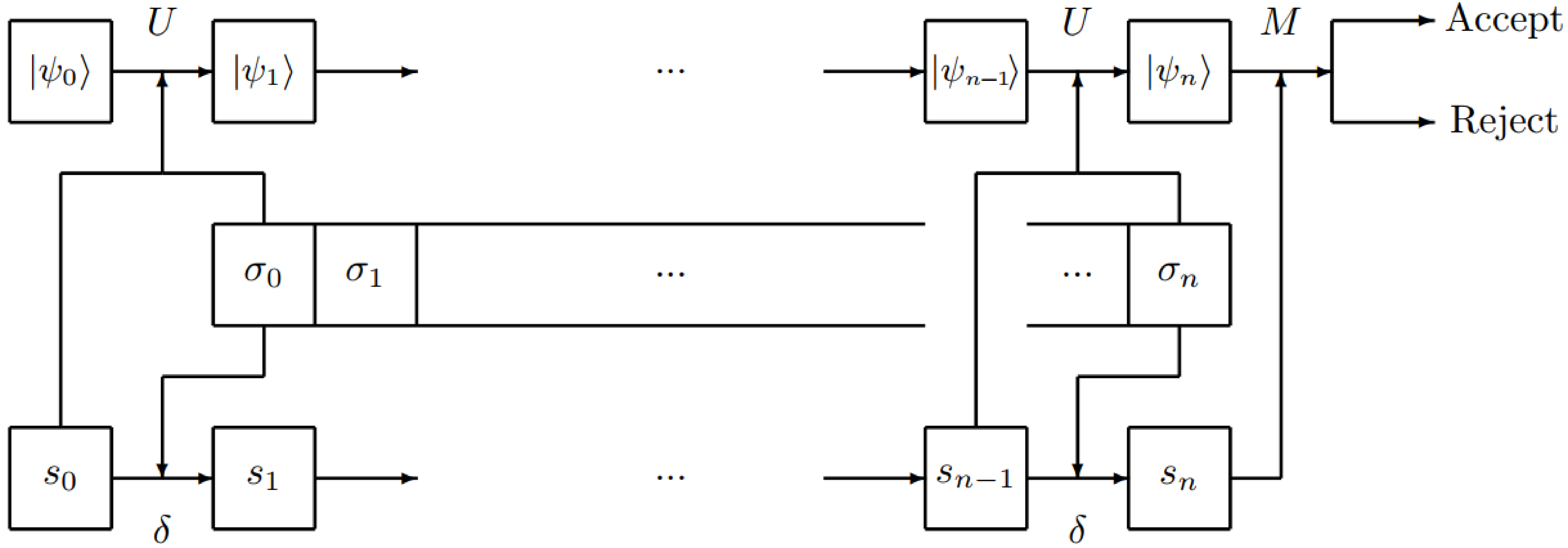}
\caption{1QFAC dynamics as an acceptor of languages.}
\label{fig:mm1qfadyn}
\end{figure}






\subsection{Determining the equivalence for  quantum finite automata}

In order to study the decision of controllability condition, in this subsection we introduce the method of how to determine the equivalence between 1QFAC, and the details are referred to \cite{LQ08, QLMS15, QLMG12}.

\begin{definition} \label{BLM}
A bilinear machine (BLM) over the alphabet $\Sigma$ is tuple
$\mathcal{M}=(S, \pi, \{M(\sigma)\}_{\sigma\in\Sigma},\eta)$, where $S$ is a finite state set with $|S|=n$, $\eta\in\mathbb{C}^{1\times n}$,   $\pi\in\mathbb{C}^{n\times 1}$,      and $M(\sigma)\in \mathbb{C}^{n\times n}$ for $\sigma\in\Sigma$.

\end{definition}

Associated to a BLM, the word function
\begin{equation}
L_{\mathcal{M}}: \Sigma^{\ast} \longrightarrow \mathbb{C}
\end{equation} is defined in the way: \begin{equation}L_{\mathcal{M}}(w)=\eta M(w_m)M(w_{m-1})\ldots M(w_1)\pi,\end{equation}  where $w=w_1w_2\ldots w_m\in \Sigma^{\ast}$.  In particular, when $L_{\cal M}(w)\in \mathbb{R}$ for
every $w\in \Sigma^{*}$,
 ${\cal M}$ is called a {\it real-valued bilinear  machine} (RBLM).

\begin{remark}\label{TPRBLM}
For any two RBLM \begin{equation} \mathcal{M}_i=(S_i, \pi_i, \{M_i(\sigma)\}_{\sigma\in\Sigma},\eta_i), i=1,2,\end{equation}  
it is easy to obtain  that, for any $w\in\Sigma^*$,
\begin{equation}
L_{\mathcal{M}_1\otimes \mathcal{M}_2}(w)=L_{\mathcal{M}_1}(w) \times L_{\mathcal{M}_2}(w);
\end{equation}
\begin{equation}
L_{\mathcal{M}_1\oplus \mathcal{M}_2}(w)=L_{\mathcal{M}_1}(w) + L_{\mathcal{M}_2}(w),
\end{equation}
where $\mathcal{M}_1\otimes \mathcal{M}_2$ and $\mathcal{M}_1\oplus \mathcal{M}_2$ are defined as usual \cite{MC00}. 
\end{remark}


\begin{definition}
Two BLM (RBLM, QFA) $\mathcal{M}_1$  and
$\mathcal{M}_2$ over the same alphabet $\Sigma$ are said to be
equivalent (resp. $k$-equivalent) if $L_{{\cal M}_1}(w)=L_{{\cal
M}_2}(w)$ for any $w\in \Sigma^{*} $ (resp. for any input string
$w$ with $ |w|\leq k$).
\end{definition}


The following proposition determines the equivalence between BLM (RBLM) \cite{CMR2006, KMOWW2011, LF2015, WLY2021}.

\begin{proposition} \label{BLMEQ}
Two BLM (RBLM) $\mathcal{M}_1$ and
$\mathcal{M}_2$ with $n_1$ and $n_2$ states, respectively, are
equivalent if and only if they are $(n_1+n_2-1)$-equivalent.
Furthermore, there exists a polynomial-time algorithm running in
time $O((n_1+n_2)^3)$ that takes as input two BLM  (RBLM) $\mathcal{M}_1$
and $\mathcal{M}_2$ and determines whether $\mathcal{M}_1$ and $\mathcal{M}_2$ are equivalent.
\end{proposition}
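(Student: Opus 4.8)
The plan is to reduce the equivalence problem for BLM to a linear-algebraic statement about the span of the reachable vectors, mirroring the classical argument of Tzeng for probabilistic automata. Given two BLM $\mathcal{M}_1=(S_1,\pi_1,\{M_1(\sigma)\},\eta_1)$ and $\mathcal{M}_2=(S_2,\pi_2,\{M_2(\sigma)\},\eta_2)$ with $n_1$ and $n_2$ states, I would first form the direct-sum machine on the state space $\mathbb{C}^{n_1+n_2}$: take the combined initial vector $\pi = \pi_1 \oplus \pi_2$, the combined transition matrices $M(\sigma) = M_1(\sigma)\oplus M_2(\sigma)$, and the "difference" covector $\eta = \eta_1 \oplus (-\eta_2)$. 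Then for every $w=w_1\cdots w_m$ we have $L_{\mathcal{M}_1}(w)-L_{\mathcal{M}_2}(w) = \eta\, M(w_m)\cdots M(w_1)\,\pi$, so $\mathcal{M}_1$ and $\mathcal{M}_2$ are equivalent if and only if $\eta\, M(w_m)\cdots M(w_1)\,\pi = 0$ for all $w\in\Sigma^*$.

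Next I would analyze the subspace $V \subseteq \mathbb{C}^{n_1+n_2}$ spanned by all vectors of the form $M(w_m)\cdots M(w_1)\pi$ as $w$ ranges over $\Sigma^*$. This is the smallest subspace containing $\pi$ and closed under every $M(\sigma)$. The key combinatorial observation is that if one builds $V$ incrementally by levels — $V_0 = \mathrm{span}\{\pi\}$, and $V_{k+1} = V_k + \sum_{\sigma\in\Sigma} M(\sigma) V_k$ — then the chain $V_0 \subseteq V_1 \subseteq \cdots$ stabilizes: as soon as $V_{k+1}=V_k$ for some $k$, we have $V_j = V_k$ for all $j\ge k$, and this must happen for some $k \le \dim V - 1 \le n_1+n_2-1$, since each strict inclusion raises the dimension by at least one and $V_0$ already has dimension one (or zero). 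Hence $V$ is spanned by vectors $M(w_m)\cdots M(w_1)\pi$ with $|w|\le n_1+n_2-1$. Since $\eta$ annihilates $\pi$ on all words iff $\eta$ annihilates all of $V$ iff $\eta$ annihilates this finite spanning set, equivalence is equivalent to $(n_1+n_2-1)$-equivalence. The RBLM case is identical since the construction stays within the same matrices.

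For the algorithmic claim I would make the level-by-level construction effective: maintain a basis $B$ for the current $V_k$ (a list of at most $n_1+n_2$ vectors in $\mathbb{C}^{n_1+n_2}$), and at each stage compute $M(\sigma)b$ for each basis vector $b\in B$ and each $\sigma\in\Sigma$, appending any vector that is linearly independent of the current basis (checked by Gaussian elimination). The process halts after at most $n_1+n_2$ rounds; each round performs $O(|\Sigma|\cdot(n_1+n_2))$ matrix–vector products and independence checks, each costing $O((n_1+n_2)^2)$ to $O((n_1+n_2)^3)$ arithmetic operations, giving the stated $O((n_1+n_2)^4)$ bound (absorbing $|\Sigma|$ as a constant, or folding it in as a factor). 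Finally, $\mathcal{M}_1$ and $\mathcal{M}_2$ are equivalent iff $\eta\cdot v = 0$ for every vector $v$ in the final basis, which is one more $O((n_1+n_2)^2)$ check. The main obstacle — really the only subtle point — is arguing cleanly that the ascending chain of subspaces stabilizes exactly when a single step fails to increase it, i.e.\ that $V_{k+1}=V_k$ forces closure under all further applications of the $M(\sigma)$'s; this is immediate once one notes $M(\sigma)V_k \subseteq V_{k+1}=V_k$, but it is the linchpin that converts an a priori infinite check into a finite one, and it is what underlies the bound $n_1+n_2-1$ rather than something larger. Everything else is bookkeeping, and the reference to Tzeng's analogous result for PFA \cite{Tze92} covers the routine parts.
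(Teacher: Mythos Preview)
The paper does not actually prove this proposition; it merely states that it follows ``similar to the equivalence of PFA \cite{Tze92}'' and cites \cite{LQ08,QLMS15}. Your proposal is correct and is precisely the standard Tzeng-style argument underlying those references: the direct-sum difference machine, the reachable-subspace chain $V_0\subseteq V_1\subseteq\cdots$ with its stabilization bound, and the breadth-first basis-building algorithm. The only minor remark is that when the paper later spells out the algorithm explicitly (Algorithm~\ref{alg:1} in Section~\ref{SECDCC}), it uses a queue-based BFS over words rather than your level-by-level subspace iteration, but the two are equivalent and yield the same complexity.
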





The following proposition \cite{LQ08,QLMS15} is also useful in this paper.

\begin{proposition}\label{BLMTOBLM} Let BLM (RBLM) ${\cal M}$  have $n$ states and the alphabet $\Sigma\cup\{\tau\}$ where $\tau\notin\Sigma$. Then we can give another  BLM (RBLM) $\hat{\cal M}$ over the alphabet $\Sigma$ with the same states, such that $L_{{\cal M}}(w\tau)=L_{\hat{\cal M}}(w)$, for any $w\in \Sigma^*$.\end{proposition}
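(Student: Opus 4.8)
The plan is to build $\hat{\mathcal M}$ from $\mathcal M$ by \emph{absorbing} the action of the letter $\tau$ into the terminal vector. Writing $\mathcal M=(S,\pi,\{M(\sigma)\}_{\sigma\in\Sigma\cup\{\tau\}},\eta)$, I would simply set
\begin{equation}
\hat{\mathcal M}=\bigl(S,\ \pi,\ \{M(\sigma)\}_{\sigma\in\Sigma},\ \eta M(\tau)\bigr),
\end{equation}
that is, keep the same state set $S$ (so $\hat{\mathcal M}$ has the same $n$ states), the same initial vector $\pi$, and the same transition matrices $M(\sigma)$ for every $\sigma\in\Sigma$, and replace \emph{only} the final vector by $\hat\eta:=\eta M(\tau)$. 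This is manifestly a BLM over the alphabet $\Sigma$.

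Next I would verify the identity $L_{\mathcal M}(w\tau)=L_{\hat{\mathcal M}}(w)$ for all $w\in\Sigma^{*}$ by directly unwinding the definition of the word function. For a nonempty string $w=w_1w_2\cdots w_m$ with each $w_i\in\Sigma$, the string $w\tau$ has $\tau$ as its last symbol, so by the definition of $L_{\mathcal M}$ the matrix $M(\tau)$ sits leftmost, adjacent to $\eta$:
\begin{align}
L_{\mathcal M}(w\tau)&=\eta\,M(\tau)\,M(w_m)M(w_{m-1})\cdots M(w_1)\,\pi\nonumber\\
&=(\eta M(\tau))\,M(w_m)\cdots M(w_1)\,\pi=L_{\hat{\mathcal M}}(w),
\end{align}
where the last equality is exactly the definition of $L_{\hat{\mathcal M}}$ evaluated at $w$ (recall $\hat{\mathcal M}$ uses $M(w_i)$ as its transition matrices for $w_i\in\Sigma$ and $\eta M(\tau)$ as its terminal vector). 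The boundary case $w=\epsilon$ is consistent: $L_{\hat{\mathcal M}}(\epsilon)=\hat\eta\pi=\eta M(\tau)\pi=L_{\mathcal M}(\tau)$.

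Finally, for the RBLM part of the statement there is nothing extra to prove: if $\mathcal M$ is real-valued, then $L_{\hat{\mathcal M}}(w)=L_{\mathcal M}(w\tau)\in\mathbb R$ for every $w\in\Sigma^{*}$, hence $\hat{\mathcal M}$ is an RBLM by definition. The only two points that require any care are the orientation convention in the word function (the matrices are applied right-to-left, so the matrix of the \emph{last} symbol is adjacent to $\eta$, which is precisely what makes the absorption $\hat\eta=\eta M(\tau)$ correct) and the empty-word boundary case handled above; neither is a genuine obstacle, so the proposition reduces to this bookkeeping observation.
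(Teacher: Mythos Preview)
Your construction is correct and is exactly the natural one: the paper states this proposition without proof (it is treated as an elementary observation), and the standard argument is precisely to absorb $M(\tau)$ into the terminal vector by setting $\hat\eta=\eta M(\tau)$, which you carry out carefully, including the orientation check and the $w=\epsilon$ case.
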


\begin{lemma}\cite{QLMS15}\label{1QFACTOBLM} For any given 1QFAC
\begin{equation}
{\cal M}=(S,Q,\Sigma, s_{0},|\psi_{0}\rangle, \delta,
\mathbb{U}, {\cal P}),
\end{equation}
 there is  a RBLM ${\cal M}'$ with
$(kn)^2$ states, where  $|S|=k$ and $|Q|=n$, such that

\begin{equation}
L_{{\cal M}}(x)=L_{{\cal M}'}(x)\label{Sim}
\end{equation}
for any $x\in\Sigma^*$.

\end{lemma}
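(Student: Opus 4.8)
The plan is to linearise the hybrid classical--quantum dynamics of $\mathcal{M}$ by the density-operator/vectorisation device, in the same spirit as the MM-QFA linearisation of Lemma~\ref{QFATOBLM} (and of the classical treatment of PFA). Recall that after reading $x=\sigma_1\cdots\sigma_n$ the machine sits in the deterministic classical state $\mu(x)$ and the pure quantum state $v(x)|\psi_0\rangle$, so that
\begin{equation}
L_{\mathcal{M}}(x)=\|P_{\mu(x),a}\,v(x)|\psi_0\rangle\|^{2}=\mathrm{tr}\!\big(P_{\mu(x),a}\,\rho(x)\big),\qquad \rho(x):=v(x)|\psi_0\rangle\langle\psi_0|\,v(x)^{\dagger}.
\end{equation}
Both the unitary chosen at the next step and the final projector depend only on the current classical state, so the object that evolves \emph{linearly} along the input is the joint operator $R(x):=|\mu(x)\rangle\langle\mu(x)|\otimes\rho(x)$ on $\mathcal{H}(S)\otimes\mathcal{H}(Q)$. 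After vectorisation this is a vector in a space of dimension $(kn)^{2}$, which will be the state set of the RBLM $\mathcal{M}'$.

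Concretely, I would take $\pi:=\mathrm{vec}\big(|s_0\rangle\langle s_0|\otimes|\psi_0\rangle\langle\psi_0|\big)$, and for each $\sigma\in\Sigma$ define $M(\sigma)$ through its action on the matrix-unit basis $\{E_{(s,q),(s',q')}\}$ of operators on $\mathcal{H}(S\times Q)$ by
\begin{equation}
M(\sigma)\,E_{(s,q),(s',q')}=\sum_{p,p'\in Q}(U_{s\sigma})_{p,q}\,\overline{(U_{s'\sigma})_{p',q'}}\;E_{(\delta(s,\sigma),p),(\delta(s',\sigma),p')}.
\end{equation}
This is a well-defined linear operator, and a one-line check (using $(U_{s'\sigma}^{\dagger})_{q',p'}=\overline{(U_{s'\sigma})_{p',q'}}$) shows it implements $|s\rangle\langle s'|\otimes X\mapsto|\delta(s,\sigma)\rangle\langle\delta(s',\sigma)|\otimes U_{s\sigma}\,X\,U_{s'\sigma}^{\dagger}$ on every rank-one classical block. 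Finally I would let $\eta$ be the functional $\eta\cdot\mathrm{vec}(Y)=\sum_{s\in S}\mathrm{tr}\big(P_{s,a}\,Y_{s,s}\big)$, where $Y_{s,s'}$ is the $(s,s')$ block of $Y$; equivalently $\eta_{(s,q),(s',q')}=\mathbbm{1}[s=s']\,(P_{s,a})_{q',q}$. Matching these to the row/column conventions of Definition~\ref{BLM} requires only a harmless transposition of a few indices.

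The verification is then an induction on $|x|$: combining $\mu(x\sigma)=\delta(\mu(x),\sigma)$ and $v(x\sigma)=U_{\mu(x)\sigma}\,v(x)$ with the displayed action of $M(\sigma)$ gives $M(\sigma_n)\cdots M(\sigma_1)\pi=\mathrm{vec}(R(x))$ for every $x=\sigma_1\cdots\sigma_n$, and then $\eta\cdot\mathrm{vec}(R(x))=\mathrm{tr}(P_{\mu(x),a}\rho(x))=L_{\mathcal{M}}(x)$. Since each $L_{\mathcal{M}'}(x)$ is a probability it is real, so $\mathcal{M}'$ is genuinely an RBLM, and it has exactly $|S\times Q|^{2}=(kn)^{2}$ states.

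I expect the only genuinely delicate point to be conceptual rather than computational: because the unitary applied on a step is \emph{conditioned} on the current classical state, it is not obvious that a single input-independent linear map can realise the step. The resolution — and the reason one must propagate the joint operator $R(x)$ rather than the pair $(\mu(x),\rho(x))$ — is that the classical factor $|\mu(x)\rangle\langle\mu(x)|$ acts as a selector: $M(\sigma)$ ``offers'' all the conditional unitaries $U_{s\sigma}$ simultaneously, but only the block at $s=\mu(x)$ is populated, so the correct conditional evolution occurs automatically while $M(\sigma)$ stays linear and independent of the input read so far. Everything else — well-definedness of $M(\sigma)$, the dimension count, reality of the word function, and the index bookkeeping for $\eta$ and $\pi$ — is routine.
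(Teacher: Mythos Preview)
Your construction is correct and is precisely the standard linearisation used for 1QFAC: encode the joint configuration as the operator $|\mu(x)\rangle\langle\mu(x)|\otimes v(x)|\psi_0\rangle\langle\psi_0|v(x)^{\dagger}$ on $\mathcal{H}(S)\otimes\mathcal{H}(Q)$, vectorise, and read off $\pi$, $M(\sigma)$, $\eta$ exactly as you describe. The dimension count $(kn)^{2}$, the diagonality of the classical block along the orbit of $\pi$, and the selector r\^ole of $|\mu(x)\rangle\langle\mu(x)|$ are all handled correctly.

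Note that the paper itself does not supply a proof of this lemma; it is quoted from \cite{QLMS15}. The argument there is essentially the same density-operator/vectorisation device you use (the same trick that underlies Lemma~\ref{QFATOBLM} for MM-QFA), so your approach coincides with the intended one. The only cosmetic difference is presentational: the cited source writes the transition operator in block form indexed by classical states, $M(\sigma)=\sum_{s\in S}|\delta(s,\sigma)\rangle\langle s|\otimes\cdots$, whereas you specify it on matrix units; these are the same linear map.
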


By means of  Lemma \ref{1QFACTOBLM} and  Proposition \ref{BLMEQ}, we have the following theorem.
\begin{theorem}\label{1QFACEQ}
Two 1QFAC ${\cal M}_1$ and
${\cal M}_2$  are
equivalent if and only if they are $(k_1 n_1)^{2}+(k_2 n_2)^{2}-1$-equivalent.
Furthermore, there exists a polynomial-time algorithm running in
time $O([(k_1 n_1)^{2}+(k_2 n_2)^{2}]^3)$ that takes as input two 1QFAC ${\cal M}_1$
and ${\cal M}_2$ and determines whether ${\cal M}_1$ and ${\cal
M}_2$ are equivalent, where  $k_i$ and $n_i$ are the numbers of classical and quantum basis states of ${\cal M}_{i}$, respectively, $i=1,2$.

\end{theorem}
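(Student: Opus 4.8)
The plan is to reduce the equivalence problem for 1QFAC to the already-solved equivalence problem for RBLM stated in Proposition~\ref{BLMEQ}, using the simulation of Lemma~\ref{1QFACTOBLM} as the bridge. Concretely, given two 1QFAC ${\cal M}_1$ and ${\cal M}_2$ with parameters $(k_1,n_1)$ and $(k_2,n_2)$, I would first apply Lemma~\ref{1QFACTOBLM} to each of them, obtaining RBLM ${\cal M}'_1$ and ${\cal M}'_2$ with $(k_1 n_1)^2$ and $(k_2 n_2)^2$ states respectively, such that $L_{{\cal M}_i}(x)=L_{{\cal M}'_i}(x)$ for every $x\in\Sigma^*$ and $i=1,2$. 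Since these word functions coincide identically, ${\cal M}_1$ and ${\cal M}_2$ are equivalent (resp. $k$-equivalent) if and only if ${\cal M}'_1$ and ${\cal M}'_2$ are equivalent (resp. $k$-equivalent); this transfer is immediate from the definition of ($k$-)equivalence, because agreement on all words of length at most $k$ for the RBLM is the same condition as agreement on all words of length at most $k$ for the 1QFAC.

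Next I would invoke Proposition~\ref{BLMEQ} with the two RBLM, whose state counts are $N_1=(k_1 n_1)^2$ and $N_2=(k_2 n_2)^2$: they are equivalent if and only if they are $(N_1+N_2-1)$-equivalent, that is, $\big((k_1 n_1)^2+(k_2 n_2)^2-1\big)$-equivalent. Combining this with the transfer established in the previous step yields exactly the stated characterization for the 1QFAC ${\cal M}_1$ and ${\cal M}_2$.

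For the algorithmic part, constructing ${\cal M}'_1$ and ${\cal M}'_2$ from the input 1QFAC takes time polynomial in their descriptions, since the initial vector, transition matrices and accepting functional of ${\cal M}'_i$ are given by explicit product/tensor expressions in the data $(s_0,|\psi_0\rangle,\delta,\mathbb{U},{\cal P})$ of ${\cal M}_i$. One then runs the $O((N_1+N_2)^4)$-time equivalence test of Proposition~\ref{BLMEQ}; substituting $N_i=(k_i n_i)^2$ gives overall running time $O\big([(k_1 n_1)^2+(k_2 n_2)^2]^4\big)$, which dominates the cost of the construction.

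The only genuinely technical ingredient here is Lemma~\ref{1QFACTOBLM} itself, namely the encoding of the 1QFAC acceptance probability $\|P_{\mu(x),a}v(x)|\psi_0\rangle\|^2$ as a single bilinear form $\eta M(x_m)\cdots M(x_1)\pi$ over a space of dimension $(kn)^2$: this requires carrying the (deterministic) classical state coordinate alongside the tensor of the quantum amplitude vector with its conjugate, so that the squared norm becomes linear in the resulting vector. Since that lemma is available to us, I expect no further obstacle; the remainder is the routine two-step reduction sketched above, and the bound and time complexity fall out by direct substitution.
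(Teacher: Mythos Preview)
Your proposal is correct and is exactly the natural argument: pass from 1QFAC to RBLM via Lemma~\ref{1QFACTOBLM} and then invoke Proposition~\ref{BLMEQ}, with the state counts $N_i=(k_in_i)^2$ yielding both the length bound and the running time by direct substitution. The paper itself does not supply a proof for this theorem but simply imports it from \cite{QLMS15}; that reference proves it by precisely the reduction you describe, so your reconstruction matches the intended proof.
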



\section{Quantum Discrete Event Systems} \label{SECQDES}

In this section, we first recollect the  supervisory control of classical DES, 
then we  formalize QDES and corresponding  supervisory control  of QDES.   

\subsection{Language and Automaton Models of DES}\label{secDES}

In this subsection, we briefly review some basic concepts concerning DES \cite{book1,book2}. A DES is modeled and represented as a nondeterministic finite automaton $G$,
described by $G=(Q,\Sigma,\delta,q_{0},Q_{m})$,
where $Q$ is the finite set of states; $\Sigma$ is the finite set
of events; $\delta: Q\times \Sigma\rightarrow {\cal P}(Q)$ is the
transition function (In what follows, ${\cal P}(X)$
represents the power set of set $X$.); $q_{0}\in Q$ is the initial state; and
$Q_{m}\subseteq Q$ is called the set of marked states. Indeed,
transition function $\delta$ can be naturally extended to
$Q\times\Sigma^{*}$ in the following manner: For any $q\in Q$, any
$s\in\Sigma^{*}$ and $\sigma\in\Sigma$,
$
\delta(q,\epsilon)=\epsilon,\hskip 2mm {\rm and} \hskip 2mm
\delta(q,s\sigma)=\delta(\delta(q,s),\sigma),
$
where  we define $\delta(A,\sigma)=\bigcup_{q\in A}\delta(q,\sigma)$ for any $A\in {\cal P}(Q)$.

In particular, when $\delta$ is a map from $ Q\times \Sigma$ to $Q$, then it is a DFA, as we depict it in Fig. \ref{fig:DFA}.




In fact, in $G$ we can represent $q_{i}$ by vector $s_{i}=(0\hskip 2mm\cdots\hskip 2mm 1\hskip 2mm \cdots\hskip 2mm 0)^T$ where 1 is in the $i$th place and the dimension equals  $n$ ($T$ denotes transpose); for $\sigma\in\Sigma$, $\sigma$ is represented as a 0-1 matrix $(a_{ij})_{n\times n}$ where $a_{ij}\in \{0,1\}$, and $a_{ij}=1$ if and only if $q_{j}\in\delta(q_{i},\sigma)$. Analogously, vector $(0\hskip 2mm\cdots\hskip 2mm 1\hskip 2mm 0\hskip 2mm \cdots\hskip 2mm 1\hskip 2mm\cdots\hskip 2mm 0)^T$ in which 1 is in the $i$th and $j$th places, respectively, means that the current state may be $q_{i}$ or $q_{j}$.


For a DES modeled by finite automaton $G=(Q,\Sigma,\delta,q_{0},Q_{m})$,
$L(G)=\{x\in\Sigma^*: \delta(q_0,x)\in Q\}$ represents all feasible input strings in DES $G$,
and  $L_m(G)=\{x\in\Sigma^*: \delta(q_0,x)\in Q_m\}$ is called the language {\it marked} by $G$.



In order to define and better understand  parallel composition of
quantum  finite automata, we reformulate the parallel composition of
crisp finite automata \cite{book1,book2}. For finite automata
$G_{i}=(Q_{i},\Sigma_{i},\delta_{i},q_{0i},Q_{mi})$, $i=1,2$, we
reformulate the parallel composition in terms of the following
fashion:
\begin{align}
G_{1}\|G_{2}
=(Q_{1}\otimes Q_{2}, \Sigma_{1}\cup\Sigma_{2},
\delta_{1}\| \delta_{2},q_{10}\otimes q_{20},Q_{m1}\otimes
Q_{m2}).
\end{align}
Here, $Q_{1}\otimes Q_{2}=\{q_{1}\otimes q_{2}:q_{1}\in
Q_{1},q_{2}\in Q_{2}\}$, and symbol $``\otimes"$ denotes
tensor product. For event $\sigma\in
\Sigma_{1}\cup\Sigma_{2}$, we define the corresponding matrix of
$\sigma$ in $G_{1}\| G_{2}$ as follows:

(i) If event $\sigma\in \Sigma_{1}\cap \Sigma_{2}$, then $\sigma=\sigma_{1}\otimes\sigma_{2}$ where $\sigma_{1}$ and $\sigma_{2}$ are the
matrices of $\sigma$ in $G_{1}$ and $G_{2}$, respectively.

(ii) If event $\sigma\in\Sigma_{1}\backslash\Sigma_{2}$, then
$\sigma=\sigma_{1}\otimes I_{2}$ where $\sigma_{1}$ is the matrix
of $\sigma$ in $G_{1}$, and $I_{2}$ is unit matrix of order
$|Q_{2}|$.

(iii) If event $\sigma\in\Sigma_{2}\backslash\Sigma_{1}$, then $\sigma= I_{1}\otimes\sigma_{2}$ where $\sigma_{2}$ is the matrix of
$\sigma$ in $G_{2}$,  and $I_{1}$ is unit matrix of order $|Q_{1}|$.

In terms of the above (i-iii) regarding the event $\sigma\in  \Sigma_{1}\cup\Sigma_{2}$, we can define
$\delta_{1}\| \delta_{2}$ as:
For $ q_{1}\otimes q_{2} \in Q_{1}\otimes Q_{2}$, $\sigma\in \Sigma_{1}\cup\Sigma_{2}$,
\begin{align}
&(\delta_{1}\| \delta_{2})(q_{1}\otimes q_{2},\sigma)\\
=&\left\{
\begin{array}{lll}
(\sigma_{1}\otimes\sigma_{2})  \times(q_{1}\otimes q_{2} ),&{\rm if}& \sigma\in \Sigma_{1}\cap \Sigma_{2},\\
(\sigma_{1}\otimes I_{2})  \times (q_{1}\otimes q_{2} ),&{\rm if}& \sigma\in \Sigma_{1} \backslash \Sigma_{2},\\
(I_{1}\otimes\sigma_{2})\times (q_{1}\otimes q_{2} ),&{\rm if}& \sigma\in \Sigma_{2} \backslash \Sigma_{1},
\end{array}
\right.
\end{align}
where $\times$ is the usual product between matrices, and, as
indicated above, symbol $\otimes$ denotes tensor product
of matrices.


\subsection{Probabilistic DES (PDES)}

In the interest of  completeness, in this subsection, we would recall the automata model  for {\it probabilistic discrete event systems} (PDES). 
Formally, there are two definitions concerning PDES, and we name them  PDES-I and PDES-II respectively. 
    \begin{definition}\cite{LW1993,PPL2009,DYQ19}
        A PDES-I could be modeled as the following probabilistic automaton:
        \begin{equation}
             G = \{ X, x_{0}, \Sigma, \delta, \rho  \},
        \end{equation}
  where $X$ is the nonempty finite set of states; $x_{0} \in X $ is the initial state;
 $\Sigma$ is the nonempty finite set of events; $\Sigma = \Sigma_{c} \cup \Sigma_{uc}$ with  $\Sigma_{c}$ and $\Sigma_{uc}$ being the disjoint   controllable and uncontrollable event sets, respectively; 
        $\delta: X \times \Sigma \rightarrow X$ is the (partial) transition function, and the function $\delta$ can be extended to $X \times \Sigma^{*} $ by the natural manner; $\rho: X \times \Sigma \rightarrow [0,1]$ is the  transition-probability function, where
        $\rho(x,\sigma)$ is the probability of  transition $\delta(x,\sigma)$ and satisfies
         $\sum_{\sigma \in \Sigma} \rho(x,\sigma) \leq 1 $, $\forall x \in X$.
        In particular, if  $\sum_{\sigma \in \Sigma} \rho(x,\sigma) = 1$, $\forall x \in X$,
        then the system $G$ is called as a non-terminating PDES-I.

    \end{definition}

 PDES-II are defined in light of PFA in \cite{Paz71} that are from \cite{TT2005, KH2015, LQXF2008} as follows.
\begin{definition}
A PDES-II is a type of
systems with a quadruple
\begin{equation}
G=(Q,\Sigma,\eta, q_{0}),
\end{equation}
where $Q$ is a finite state space; $q_{0}\in Q $ is the initial
state; $\Sigma$ is a finite set of events; $\eta:
Q\times\Sigma\times Q\rightarrow [0,1] $ is a state transition
function: for $q, q'\in Q $ and $\sigma\in\Sigma$, $\eta(q,
\sigma, q')$ represents the probability that event $\sigma$
will occur, together with transferring the state of the machine
from $q$ to $q'$. If we require that $\sum_{q'\in Q}\eta(q,
\sigma, q')=1$ for any $q\in Q$ and any $\sigma\in\Sigma$, then it is exactly the PFA in \cite{Paz71} as defined in subsection \ref{DFAPFA}.
\end{definition}

However,  the problems of state complexity  in PDES-I and PDES-II  still need to be studied.






\subsection{Quantum  DES (QDES)}

As in \cite{MC00}, a {\it quantum language} over finite input alphabet $\Sigma$ is defined as a function mapping words to probabilities, i.e.,
a function from  $\Sigma^{\ast}$ to $[0,1]$. 

For any 1QFA ${\cal M}$ (MO-1QFA,1QFAC) with finite input alphabet $\Sigma$, the accepting probability $L_{\cal M}(x_1\dots x_n)$ for any $x_1\dots x_n\in \Sigma^*$ is defined as before. Therefore ${\cal M}$ {\it generates a quantum language} $L_{{\cal M}}$ over finite input alphabet $\Sigma$.

For any two quantum languages $f_1$ and $f_2$  over finite input alphabet $\Sigma$, denote $f_2\subseteq f_2$ if and only if  $f_2(w)\leq f_2(w)$ for any $w\in\Sigma^*$.


Denote
\begin{equation}
L_{{\cal M}}^{\lambda}=\{x\in\Sigma^{\ast}: f_{\cal M}(x)>\lambda\},
\end{equation}
where $0\leq\lambda<1$. Then
$L_{{\cal M}}^{\lambda}$ is called the language recognized by ${\cal M}$ \textit{with cut-point} $\lambda$.

A language, denoted by $L_{{\cal M}}^{\lambda,\rho}\subseteq \Sigma^{\ast}$, is recognized by ${\cal M}$ \textit{with some  cut-point $\lambda$ isolated by  some $\rho>0$}, if for any $x\in L_{{\cal M}}^{\lambda,\rho}$, $f_{{\cal M}} (x)\geq \lambda+\rho$ and for any $x\notin  L_{{\cal M}}^{\lambda,\rho}$, $f_{{\cal M}} (x)\leq \lambda-\rho$. In fact, if a language $L$ is recognized by a 1QFA ${\cal M}$ \textit{with some  cut-point $\lambda$ isolated by  some $\rho>0$}, then $L$ is also called to be recognized by a 1QFA ${\cal M}$ with {\it bounded error}.

In DES, the event set (input alphabet) $\Sigma$ is partitioned into two disjoint subsets $\Sigma_c$ (controllable  events) and  $\Sigma_{uc}$ (uncontrollable events), and a specification language $K\subset \Sigma^{\ast}$ is given. It is assumed that  controllable  events can be disabled by a supervisor. To solve the supervisory control problem we need to find a supervisor for performing a feedback control with the plant that is described by an automaton.
 Formally, supervisor $S$ is defined as a function:
\[
S: L(G)\rightarrow {\cal P}(\Sigma).
\]
It is interpreted that for each $s\in L(G)$, $S(s)\cap
\{\sigma:s\sigma\in L(G)\}$ represents the set of enabled events
after the occurrence of $s$. Furthermore, it is required that for
any $s\in L(G)$,
\begin{equation}\label{CS}
\Sigma_{uc}\cap \{\sigma\in \Sigma:s\sigma\in L(G)\}\subseteq S(s),
\end{equation}
which means that after the occurrence of any physically possible
string of events, the physically possible  uncontrollable events
are not allowed to be disabled by $S$. The condition described by
Eq. (\ref{CS}) is called {\it admissible} for $S$ \cite{book1,book2}.

A QDES is a quantum system (called a quantum plant) described by a 1QFA  ${\cal M}$ together with the event set $\Sigma=\Sigma_c\cup\Sigma_{uc}$
and simulated as the quantum language generated by this 1QFA ${\cal M}$ (sometimes  simulated as the quantum language recognized by this 1QFA ${\cal M}$ with some cut-point $\lambda$ or with some  cut-point $\lambda$ isolated by  some $\rho>0$).

A quantum supervisor ${\cal S}$ for controlling ${\cal M}$ is defined formally as a function ${\cal S}:  \Sigma^\ast \rightarrow [0,1]^\Sigma$, where for any $s\in\Sigma^{\ast}$, ${\cal S}(s)$ is a quantum language over $\Sigma$. Intuitively, after inputting $s$ in QDES ${\cal M}$, for any  $\sigma\in\Sigma$, ${\cal S}(s)(\sigma)$ denotes the degree to which $\sigma$ is enabled.

Therefore, we require that
 quantum supervisor ${\cal S}$ logically satisfies: $\forall \sigma\in\Sigma_{uc}$, $\forall s\in \Sigma^{\ast}$,
\begin{equation} \label{admissible}
L_{{\cal M}}(s\sigma)\leq {\cal S}(s)(\sigma),
\end{equation}
 which denotes that both $s$ and $s\sigma$ being feasible in the quantum plant for uncontrollable event $\sigma$ results in $\sigma$ being enabled after quantum supervisor ${\cal S}$ controlling $s$. Equation (\ref{admissible}) is called \textit{quantum admissible condition}.

 The feedback loop of supervisory control of QDES ${\cal M}$ controlled by quantum supervisor ${\cal S}$ can be depicted as Fig. \ref{fig:QDES}.

In classical DES, given a DES modeled by finite automaton $G$ and admissible
supervisor $S$, the resulting controlled system denoted by $S/G$
that means $S$ controlling $G$, is also a DES modeled by a
language $L(S/G)$ defined recursively as follows \cite{book1, book2}: $\epsilon \in
L(S/G)$ and
\vskip 2mm
$s\sigma\in L(S/G)\Longleftrightarrow s\in L(S/G)$ and $s\sigma \in  L(G)$ and $\sigma\in S(s)$.\\

Naturally, we denote by ${\cal S}/{\cal M}$ the controlled system by ${\cal S}$, 
and $L_{{\cal S}/{\cal M}}$  is defined as a function from $\Sigma^{\ast}$ to $[0,1]$  as follows:



First, it is required that $L_{{\cal S}/{\cal M}} (\epsilon)=1$ (i.e., the starting state is an accepting state) 
 and then recursively,  $\forall s\in\Sigma^{\ast}$, $\forall \sigma\in\Sigma$, the following equation holds
 \begin{equation}
 L_{{\cal S}/{\cal M}}(s\sigma)=\min\{L_{{\cal S}/{\cal M}}(s), L_{{\cal M}}(s\sigma),{\cal S}(s)(\sigma)\}.
 \end{equation}
 By intuitive the above equation logically implies that $s\sigma$ can be performed by the controlled system ${\cal S}/{\cal M}$ if and only if $s$ can be performed by the controlled system ${\cal S}/{\cal M}$ and $s\sigma$ is feasible in the quantum plant as well as $\sigma$ is enabled by the quantum supervisor ${\cal S}$ after the event string $s$ occurs.



In classical DES and automata theory, the prefix-closure $pr(L)$ of a language $L$ over alphabet $\Sigma$ is defined as: for any $s\in\Sigma^{\ast}$, $s\in pr(L)$ if and only if $st\in L$ for some $t\in\Sigma^{\ast}$. Naturaly,
if ${\cal K}$ is a quantum language over alphabet $\Sigma$, then logically we define the quantum language of its prefix-closure $pr({\cal K})$ as follows: $\forall s\in \Sigma^{\ast}$,
\begin{align}
     pr({\cal K})(s)=\sup_{t\in\Sigma^{\ast}}{\cal K}(st).
 \end{align}


\begin{figure}[htbp]
\centering
\includegraphics[width=0.55\textwidth]{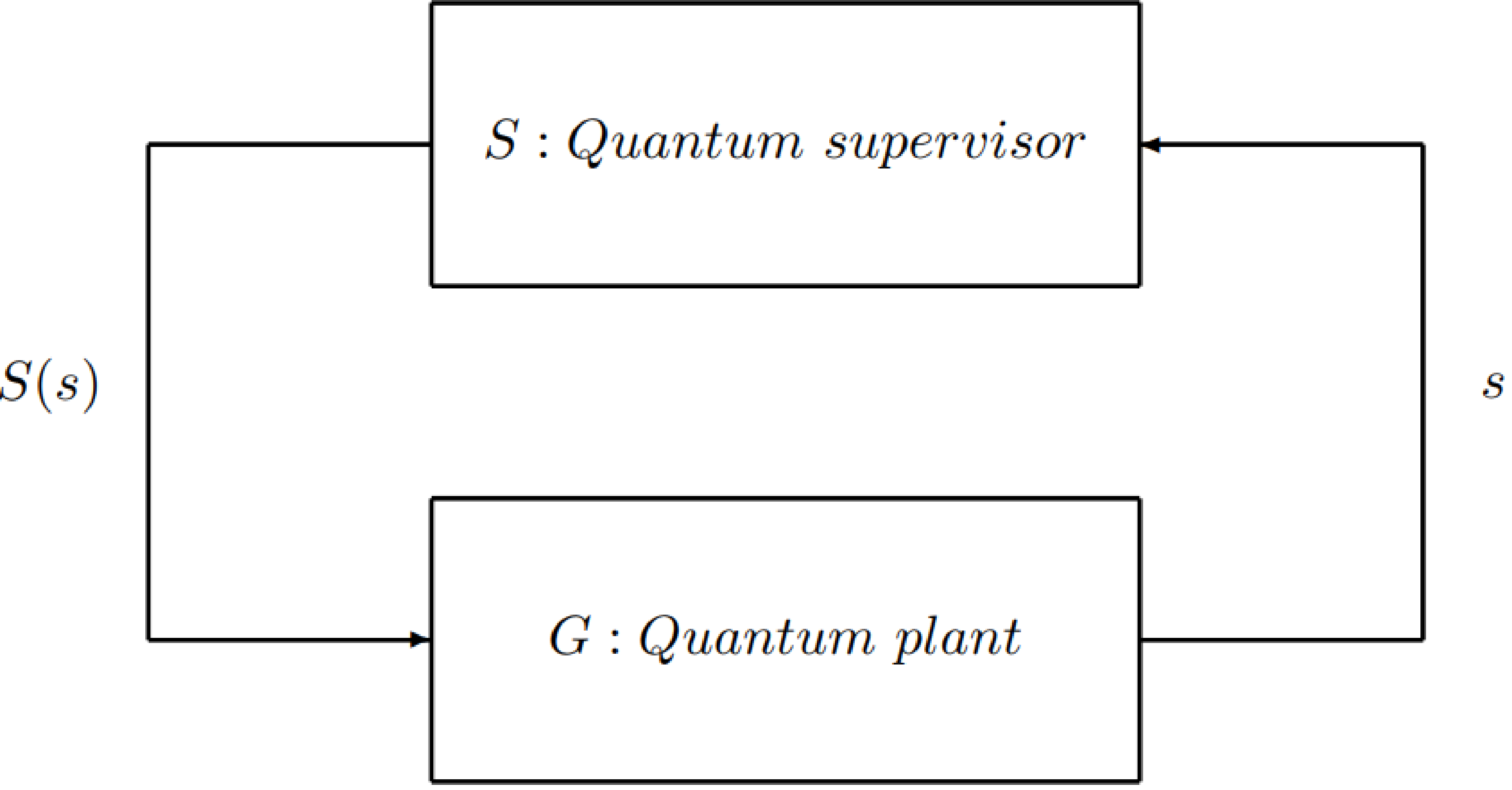}
\caption{The supervisory control of QDES, where $G$ represents the uncontrolled system and $S$ represents the quantum supervisor.}
\label{fig:QDES}
\end{figure}



\subsection{Parallel composition of QDES}

Let two QDES with the same finite event set (input alphabet) $\Sigma$ be described by two 1QFAC ${{\cal M}_i}=(S_i,Q_i,\Sigma, s_{0}^{(i)},|\psi_{0}^{(i)}\rangle, \delta_i,
\mathbb{U}_i, {\cal P}_i)$, $i=1,2.$ Then the parallel composition of QDES ${{\cal M}_1}$ and ${{\cal M}_2}$ is their tensor operation, that is the 1QFAC ${{\cal M}_1}\otimes {{\cal M}_2}$ as follows.
\begin{align}
&{{\cal M}_1}\otimes {{\cal M}_2}=(S_1\otimes S_2, Q_1\otimes Q_2, \Sigma, (s_{0}^{(1)}, s_{0}^{(2)}), \nonumber\\
&|\psi_{0}^{(1)}\rangle\otimes |\psi_{0}^{(2)}\rangle,
\delta_1 \otimes \delta_2,
 \mathbb{U}_1\otimes \mathbb{U}_2, {\cal P}_1\otimes {\cal P}_2),
\end{align}
 where
 \begin{itemize}
 \item  $S_1\otimes S_2=\{(s_1,s_2): s_i\in S_i, i=1,2\}$;

  \item $Q_1\otimes Q_2$ means the set $\{|q_{1,i}\rangle\otimes |q_{2,j}\rangle: q_{1,i}\in Q_1, q_{2,j}\in Q_2\}$;

  \item $\mathbb{U}_1\otimes \mathbb{U}_2=\{U_{s_i\sigma}\otimes U_{s_j\sigma}: (s_i,s_j)\in S_1\otimes S_2, U_{s_i\sigma}\in \mathbb{U}_1, U_{s_j\sigma}\in \mathbb{U}_2, \sigma\in\Sigma \}$;

   \item ${\cal P}_1\otimes {\cal P}_2=\{{\cal P}_{(s_i,s_j)}:(s_i,s_j)\in S_1\otimes S_2\}$ and ${\cal P}_{(s_i,s_j)}=\{P_{(s_i,s_j),a},P_{(s_i,s_j),r}\}$;

   \item $\delta_1 \otimes \delta_2 ((s_1,s_2),\sigma)=(\delta_1(s_1,\sigma), \delta_2(s_2,\sigma))$.
\end{itemize}
 It is easy to check that for any $s=x_1x_2\ldots x_n\in \Sigma^{\ast}$,
\begin{align}
&L_{  {{\cal M}_1}\otimes {{\cal M}_2}}(s)\nonumber\\
=& \|P_{(\mu_1(s),\mu_2(s)),a}v_1(s)\otimes v_2(s)|\psi_{0}^{(1)}\rangle\otimes |\psi_{0}^{(2)}\rangle\|^{2}\\
=&\|P_{\mu_1(s),a}v_1(s)|\psi_{0}^{(1)}\rangle\|^{2}\|P_{\mu_2(s),a}v_2(s)|\psi_{0}^{(2)}\rangle\|^{2}\\
=&L_{{\cal M}_1}(s) L_{{\cal M}_2}(s),
\end{align}
where
\begin{align}
v_i(s)=&U_{\mu_i( x_{1}\cdots x_{n-2}x_{n-1} )x_{n}}U_{\mu_i(x_{1}\cdots
 x_{n-3}x_{n-2})x_{n-1}}\cdots   \nonumber \\
 & U_{\mu_i(x_{1})x_{2}}
U_{s_{0}^{(i)}x_{1}},
\end{align}
and $\mu_i(s)=\delta_i^*(s_{0}^{(i)},s)$, $i=1,2.$

\section{Supervisory Control of QDES} \label{SECQSCT}

In this section, we first present some properties and new examples concerning 1QFA, and these results are new and useful for the study of supervisory control of QDES. In DES, the occurrence of an event string $s=x_1x_2\ldots x_n$ being feasible entails usually that any prefix of $s$ is feasible as well \cite{book1,book2}. So, 1QFA used for simulating QDES need to recognize with cut-point or bounded error prefix-closured languages. However, we will prove that any MO-1QFA is short of this ability.   MM-1QFA can recognize some prefix-closured languages, but cannot recognize all regular languages with bounded error.  Therefore, 1QFAC  are better for simulating QDES. QDES simulated by 1QFAC can be thought of as  hybrid systems of quantum and classical control since 1QFAC are an integration of MO-1QFA and DFA.

\subsection{Some properties and new examples concerning QFA}

First we present a result from \cite{BP02}.

{\bf Fact 1}.  \cite{BP02} For any unitary matrix $U$ and any $\epsilon>0$ there exists
an integer $n>0$ such that $\| I-U^n\|_2< \epsilon$.

We call a language $L$ is prefix-closured if for any $s\in L$,  any prefix of $s$ also belongs to $L$.
From the above fact we can prove that no MO-1QFA can recognize prefix-closured languages. That is the following Fact.

{\bf Fact 2}.  Let $\Sigma$ be a finite alphabet, and let $L\subsetneq\Sigma^*$ be any regular language with prefix closure. Then no MO-1QFA can recognize $L$ with cut-point or bounded error.

{\bf Proof:}
First we note that empty string $\epsilon\in L$. If any $s\in L$ and any $\sigma\in \Sigma$ imply $s\sigma\in L$, then it is easy to see $L=\Sigma^*$. So, there exist $s\in L$ and $\sigma\in \Sigma$ such that $s\sigma\notin L$, and therefore $s\sigma^k\notin L$ for any $k\geq 1$. If there exist an MO-1QFA ${\cal M}=(Q, |\psi_0\rangle,
\{U(\sigma)\}_{\sigma\in \Sigma}, Q_a,Q_r)$ and a cut-point $0\leq\lambda<1$ such that  ${\cal M}$ recognizes $L$ with cut-point $\lambda$, then $\|P(a)U_s|\psi_0\rangle\|^2>\lambda$ due to $s\in L$. By virtue of {\bf Fact 1}, there is $k\geq 1$ such that $\|U_{\sigma}^k-I\|_2<\|P(a)U_s|q_1\rangle\|-\sqrt{\lambda}$, and therefore we have
\begin{align}
\|P(a)U_{s{\sigma}^k}|q_1\rangle-P(a)U_s|q_1\rangle\|\leq &\|U_{s{\sigma}^k}|q_1\rangle-U_s|q_1\rangle\|\\
\leq&\|U_{\sigma}^k-I\|_2\\
<&\|P(a)U_s|q_1\rangle\|-\sqrt{\lambda},
\end{align}
 which results in $\|P(a)U_{s{\sigma}^k}|q_1\rangle\|>\sqrt{\lambda}$, implying $s{\sigma}^k\in L$, a contradiction. So, we  have no MO-1QFA recognizing $L$ with cut-point (or bounded error).
\hskip 7mm $\blacksquare$

Now we present the first prefix-closured regular language that shows the state complexity advantage of 1QFAC over PFA. 

\begin{example}\label{EGnew} For any $m\in \mathbb{Z}^+$, let
\begin{equation} \label{Lm}
L(m)=\{w\in\{0,1\}^*:0^{km}1 \text{ is not a prefix of } w,\forall k\in \mathbb{N} \}
\end{equation}
where $\mathbb{N}$ is the set of all natural numbers, and $ \mathbb{Z}^+$ is the set of all positive integers. 
\end{example}

We have the following theorem.

\begin{theorem} \label{EGtheorem}
There exists an 1QFAC recognizing language $L(m)$ (Eq. (\ref{Lm})) with cut-point $0$ and with $2$ classical states and $2$ quantum states, but any PFA recognizing  $L(m)$ with cut-point $0$ has at least $\lceil\log_2m\rceil$ states.
\end{theorem}

Next we prove the theorem in detail. First we need a lemma.

\begin{lemma}\label{lower_bound}
Let $L$ be a language over $\Sigma=\{0\}$. Suppose the minimal DFA recognizing $L$ has $m$ states. Then any PFA recognizing $L$  with cut-point $0$ has at least $\lceil\log_2m\rceil$ states.
\end{lemma}
\begin{proof}
   Let $\mathcal{M}=(S,\Sigma,\pi,\{M(\sigma)\}_{\sigma\in\Sigma},\eta)$ be a PFA that recognizes $L$ with cut-point $0$, where $S=\{s_1,s_2,\cdots,s_{|S|}\}$. Denote by $p_{ij}$ the $(i,j)$-th entry of $M(0)$. For any $s_i\in S$, define $\delta$ a transfer relation of states  as follows:
$$\delta(\{s_i\},0)=\{s_j:p_{ij}>0\};$$ 
for any $S_1\subseteq S$,  define 
$$\delta(S_1,0)=\mathop{\bigcup}\limits_{s\in S_1}\delta(\{s\},0);$$ 
for any $S_1\subseteq S$, $x\in \Sigma^*$, recursively  define 
$$\delta(S_1,0x)=\delta(\delta(S_1,0),x).$$ 
Let $S_0=\{s\in S:\text{the initial probability of $s$ is greater than $0$}\}$ and let $F$ be the set of accepting states of $\mathcal{M}$. It is easy to check that $L_{\mathcal{M}}^{0}=\{x\in\Sigma^*:\delta(S_0,x)\cap F\neq\varnothing\}$.

   Based on the above facts, we can construct a DFA  recognizing $L$ with $2^{|S|}$ states. Let $\mathcal{A}=(\mathcal{P}(S),\Sigma,\delta,S_0,Q_a)$ be a DFA, where $Q_a=\{S_1\subseteq S:S_1\cap F\neq\varnothing\}$. We can see that $x\in\Sigma^*$ is accepted by $\mathcal{A}$, iff $\delta(S_0,x)\in Q_a$,  iff $\delta(S_0,x)\cap F\neq\varnothing$, and iff $x\in L_{\mathcal{M}}^{0}$. Thus $\mathcal{A}$ recognizes $L_{\mathcal{M}}^{0}$.
   
   Since the minimal DFA of $L$ has $m$ states and $\mathcal{A}$ has $2^{|S|}$ states, we get that $2^{|S|}\geq m$. Therefore $|S|\geq \lceil\log_2m\rceil$.  
\quad  $\blacksquare$
\end{proof}

From the above theorem we have  a corollary.

\begin{corollary}\label{corollary1}
Let $L=\{0^{t}:t \bmod m\neq 0\}$ be a language over $\Sigma=\{0\}$. Then any PFA recognizing $L$  with cut-point $0$ has at least $\lceil\log_2m\rceil$ states, where $m\in\mathbb{Z}^+$.
\end{corollary}
\begin{proof}
Immediate from Lemma \ref{lower_bound}.  
\quad $\blacksquare$
\end{proof}

Due to Corollary \ref{corollary1} we can obtain the following proposition.


\begin{proposition}\label{p1}
Any PFA recognizing language $L(m)$ with
 cut-point $0$ has at least $\lceil\log_2m\rceil$ states, where $m\in\mathbb{Z}^+$.
\end{proposition}
\begin{proof} We prove by contradiction. 
Let $\mathcal{M}=(S,\Sigma,\pi,\{M(\sigma)\}_{\sigma\in\Sigma},\eta)$ be a PFA that recognizes $L(m)$ with cut-point $0$ and with less than $\lceil\log_2m\rceil$ states, where $\Sigma=\{0,1\}$. Define $\chi:\mathbb{R}^{|S|}\rightarrow \mathbb{R}^{|S|}$ as: for any $\eta_1\in {\mathbb{R}}^{|S|}$, denote by $a_i,b_i$  the $i$-th entries of $\eta_1$ and $\chi(\eta_1)$, respectively, where 
\begin{equation}
 b_i=
\begin{cases}
        0, &\text{if $a_i=0$},\\
        1,&\text{if $a_i\neq 0$}.
\end{cases}
\end{equation}
Since $\mathcal{M}$ recognizes $L(m)$ with cut-point $0$,  we have $\pi M(0)^{t}M(1)\eta>0$ iff $t \bmod m\neq 0$. Since $\pi M(0)^{t}M(1)\eta>0$ iff $\pi M(0)^{t}\chi(M(1)\eta)>0$, we have $\pi M(0)^{t}\chi(M(1)\eta)>0$ iff $t \bmod m\neq 0$. Thus, PFA $\mathcal{M}'=(S,\{0\},\pi,\{M(\sigma)\}_{\sigma\in\{0\}},\chi(M(1)\eta))$ can recognize unary language $L=\{0^{t}:t \bmod m\neq 0\}$ with cut-point $0$. It contradicts Corollary \ref{corollary1}. Therefore, the proposition holds. \quad $\blacksquare$\end{proof}

Now we prove the state complexity advantage of  1QFAC over PFA for recognizing $L(m)$.
\begin{proposition}\label{p2}
There exists an 1QFAC recognizing language $L(m)$ with cut-point $0$ and with $2$ classical states and $2$ quantum states, where $m\in\mathbb{Z}^+$.
\end{proposition}
\begin{proof}
The idea for constructing a 1QFAC  is as follows. For any input string $w\in\Sigma^*$, we use classical states to determine whether $w$ starts with $0^n1$, where $n\in\mathbb{N}$. If not, we accept it. Otherwise we measure quantum states to determine whether $n \bmod m=0$. So,  1QFAC $\mathcal{M}=(S,Q,\Sigma,s_0,|q_0\rangle,\delta,\mathbb{U},\mathcal{P})$ is constructed as follows, where 
\begin{itemize}
\item $S=\{s_0,s_{f}\}$;
\item $Q=\{q_0,q_1\}$;
\item $\Sigma=\{0,1\}$;
\item $\delta(s_0,0)=s_0,\delta(s_0,1)=s_{f}, \delta(s_{f},0)=\delta(s_f,1)=s_f$;
\item $U_{s_0,0}=\begin{bmatrix}
                        \cos{\frac{\pi}{m}}& \sin{\frac{\pi}{m}}  \\
                        -\sin{\frac{\pi}{m}} & \cos{\frac{\pi}{m}}
                   \end{bmatrix}$ and other operators in $\mathbb{U}$ are $I$;
\item $P_{s_0,acc}=I,P_{s_0,rej}=O$; $P_{s_f,acc}=|q_1\rangle\langle q_1|,P_{s_f,rej}=|q_0\rangle\langle q_0|$.
\end{itemize}
It works as follow. For any input string $w\in\Sigma^*$, if $w$ does not start with $0^n1$, where $n\in\mathbb{N}$, that is $w\in\{0\}^*$, then the final classical state of $\mathcal{M}$ is $s_0$ and $w$ will be accepted with probability $1$. If $w$ starts with $0^n1$, then the final classical state of $\mathcal{M}$ is $s_f$. The final quantum state is $U_{s_0,0}^n|q_0\rangle$. Note that
\begin{equation}
U_{s_0,0}^n=\begin{bmatrix}
                        \cos{\frac{\pi n}{m}}& \sin{\frac{\pi n}{m}}  \\
                        -\sin{\frac{\pi n}{m}} & \cos{\frac{\pi n}{m}}
                   \end{bmatrix}.\
\end{equation}
We obtain that the accepting probability $L_{\cal M}(w)$ is 
\begin{align}
  L_{\cal M}(w)  &=\|P_{s_f,acc}U_{s_0,0}^n|q_0\rangle\|^2\\
&=\sin^2{\frac{\pi n}{m}}.
\end{align}
If $n \bmod m=0$, then $L_{\cal M}(w)=0$, otherwise $L_{\cal M}(w)>0$. Therefore, the proposition holds. 
\quad $\blacksquare$

\end{proof}

By combining Propositions  \ref{p1} and \ref{p2},  we have proved Theorem \ref{EGtheorem}.

\begin{remark}
For any $p,q\in \mathbb{Z}^+$ and $p,q\geq 2$, it is clear that $L(p)\subset L(pq)$ holds.

\end{remark}

\vskip 5mm

In addition, we give another example   $L^{(N)}$ that shows 1QFAC have the state complexity advantage over DFA. We can construct  a PFA that requires more states than 1QFAC to recognize $L^{(N)}$,
 but we still do not know the minimal state number of PFA  recognizing $L^{(N)}$.

\begin{example} \label{EG1} Let $\Sigma=\{0,1,2\}$. Given a natural number $N$,
\begin{align}
L^{(N)}=&\{w\in \Sigma^{\ast}: |w_{0,1}|< 2N\}
\cup\{w\in \Sigma^{\ast}:    \nonumber  \\
& |w_{0,1}|=2N,
  w_{0,1}=x_1x_2\cdots x_Ny_{1}y_2\cdots y_{N}\}, 
\end{align}
 where \begin{equation}\sum_{i=1}^{N}x_i2^{N-i}+\sum_{i=1}^{N}y_i2^{N-i}=2^N-1 \end{equation} and
$w_{0,1}$ denotes the substring of $w$ by removing all $2$ in $w$.

\end{example}

\begin{remark}
By means of Myhill-Nerode theorem \cite{HU79},  we can know that DFA require  $\Omega(2^N)$ states to recognize the language $L^{(N)}$. In fact, as usual,  define the equivalence relation $\equiv_{L^{(N)}}$ over $\Sigma^*$: for any $x,y\in \Sigma^*$, $x\equiv_{L^{(N)}}y$ if and only if for any $z\in \Sigma^*$, $xz\in L^{(N)} \Leftrightarrow yz\in L^{(N)}$. For any $x,y\in \{0,1\}^*$, with $|x|=|y|=N$ and $x\neq y$, then there is $z\in \{0,1\}^*$ with $|z|=N$ such that $x+z=2^N-1$ (i.e., $xz\in L^{(N)}$). However, $yz\notin L^{(N)}$ since $x+z\neq y+z$ due to $x\neq y$. So, $x \notequiv_{L^{(N)}}y$ and the number of equivalence classes is at least $|\{0,1\}^N|=2^N$. As a result, the number of states of any DFA recognizing $L^{(N)}$ is at least $2^N$ as well.

\end{remark}

For 1QFAC to recognize $L^{(N)}$, we have the following result.

\begin{theorem}
 For any $0<\epsilon<1$, there exists a 1QFAC ${\cal M}$ having  $2N+2$ classical states and $\Theta(N)$ quantum basis states to  recognize $L^{(N)}$, satisfying $L_{{\cal M}}(w)=1$ for every $w\in L^{(N)}$, and $L_{{\cal M}}(w)<\epsilon$ for every $w\in \Sigma^*\setminus L^{(N)}$. 
\end{theorem}


\begin{proof} ${\cal M}$ can be constructed as follows. First we need to employ an important result by Ambainis and Freivalds \cite{AF98}: For language $\{0^{kp}: k\in \mathbb{N}\}$, where $p$ is a prime number and $2^{N+1}<p<2^{N+2}$, then there is an MO-QFA ${\cal M}_0$ recognizing $\{0^{kp}: k\in \mathbb{N}\}$, say ${\cal M}_0=(Q, |\psi_0\rangle,
\{U(0)\}, Q_a,Q_r)$, where $U(0)^p=I$ and $|Q|=\Theta(\log p)$. Then
$\|P(a)U(0)^t|\psi_0\rangle\|^2=1$ for $t=kp$ with some $k\in\mathbb{N}$, and $\|P(a)U(0)^t|\psi_0\rangle\|^2<\epsilon$ for $t\neq kp$ with any $k\in\mathbb{N}$.

1QFAC ${\cal M}=(S,Q,\Sigma, s_{0},|\varphi_{0}\rangle, \delta,
\mathbb{U}, {\cal P})$ can be constructed as:
\begin{itemize}
\item $S=\{s_i:i=0,1,\ldots,2N+1\}$;
\item For $\sigma\in\{0,1\}$, $\delta(s_i,\sigma)=s_{i+1}$ for $i\leq 2N$; $\delta(s_{2N+1},\sigma)=s_{2N+1}$, and $\delta(s_i,2)=s_i$ for any $s_i\in S$;

\item ${\cal P}=\{{\cal P}_{s_i}: s_i\in S\}$ where ${\cal P}_{s_i}=\{P_{s_i,a},P_{s_i,r}\}$ and $P_{s_i,a}=I$ for $i<2N$, $P_{s_{2N},a}=P(a)$, $P_{s_{2N+1},r}=I$;

\item $|\varphi_{0}\rangle=U(0)^{p-2^{N}+1}|\psi_0\rangle$;
\item $\mathbb{U}=\{U_{s\sigma}:s\in S,\sigma\in\Sigma\}$ where  $U_{s\sigma}=U(0)^{\sigma 2^{N-1-i}}$ for $\sigma\in\{0,1\}$, $s=s_i$ or $s=s_{N+i}$,  with $i\leq N-1$; $U_{s_{2N}\sigma}$ and $U_{s_{2N+1}\sigma}$ can be any unitary operator for $\sigma\in\{0,1\}$, $U_{s2}=I$ for any $s\in S$.
\end{itemize}

In the light of the above constructions, for $w\in\Sigma^*$, if $|w_{0,1}|< 2N$ then $w$ is accepted exactly;  if  $|w_{0,1}|> 2N$ then $w$ is rejected exactly; if $|w_{0,1}|= 2N$, denote $w_{0,1}= x_1x_2\cdots x_Ny_{1}y_2\cdots y_{N}$, then the classical state is $s_{2N}$, and the quantum state is
\begin{align}
&U_{s_{2N-1}y_N}\cdots U_{s_Ny_1}U_{s_{N-1}x_N}\cdots U_{s_1x_2}U_{s_0x_1}|\varphi_{0}\rangle\\
=&U(0)^{p-2^N+1+\sum_{i=1}^{N}x_i2^{N-i}+\sum_{i=1}^{N}y_i2^{N-i}}|\psi_0\rangle,\\
=&|\psi(w)\rangle,
\end{align}
and the accepting probability is
\begin{equation}
\|P(a)|\psi(w)\rangle\|^2.
 \end{equation}
 So, $L_{{\cal M}}(w)=1$ for $w\in L^{(N)}$ and $L_{{\cal M}}(w)\leq \epsilon$ for $w\notin L^{(N)}$.

 In fact, after reading input symbol 2, neither the classical nor quantum states have been changed, so without loss of generalization, we consider the dynamics of ${\cal M}$ for computing string $\sigma_0\sigma_1\ldots\sigma_{2N-1}\in\{0,1\}^*$, and it is depicted by Fig. \ref{fig:example1}.
\quad $\blacksquare$

\begin{figure}[htbp]
\centering
\includegraphics[width=0.9\textwidth]{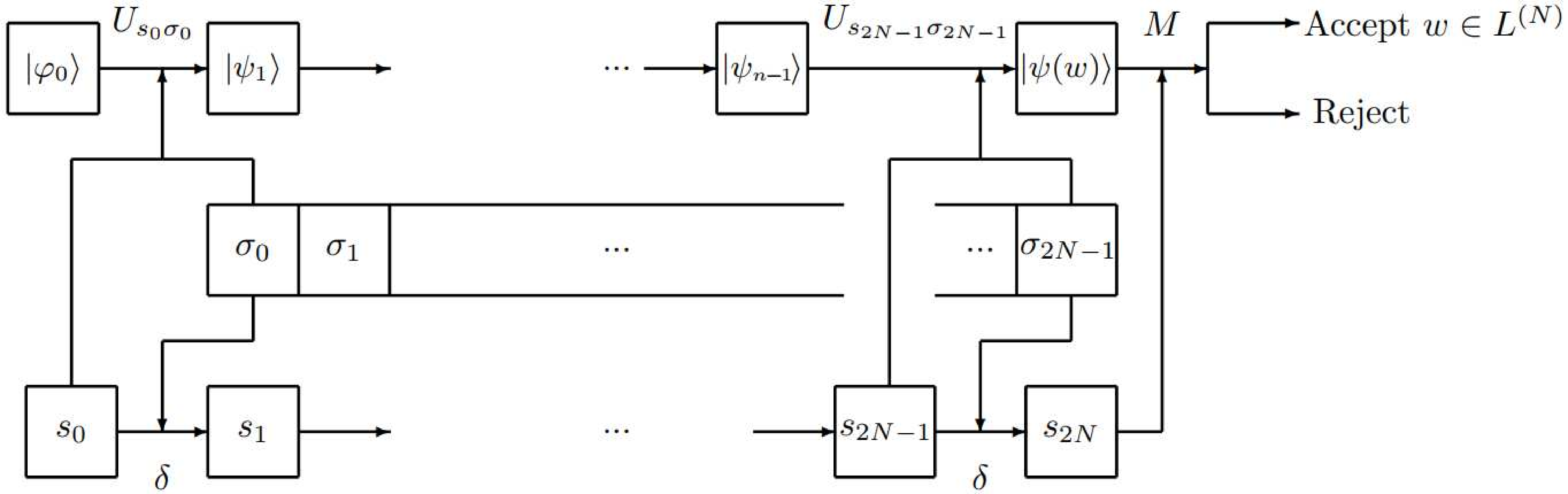}
\caption{1QFAC dynamics ${\cal M}$ for input string $\sigma_0\sigma_1\ldots\sigma_{N-1}\in\{0,1\}^*$.}
\label{fig:example1}
\end{figure}

\end{proof}

For PFA to recognize $L^{(N)}$, we have:

\begin{theorem}
 For any $0<\epsilon<1$, 
there exists a PFA   ${\cal M'}$ with $O(N^3)$ states recognizing 
$L^{(N)}$,  satisfying that the accepting probability $P_{{\cal M'}}(w)> 1-\epsilon$ for  $w\in  L^{(N)} $ and $P_{{\cal M'}}(w)<\epsilon$ for  $w\notin  L^{(N)} $.

\end{theorem}

\begin{proof}
Due to Theorem 10 by Ambainis and Freivalds in \cite{AF98}, there exists a PFA $\mathcal{M}=(S,\{0\},\pi,\{M(\sigma)\}_{\sigma\in\{0\}},\eta)$ with $O(N^2)$ states recognizing $L=\{0^{2^{N}-1}\}$  with probability $1-\epsilon$, that is,  $\pi M(0)^{2^N-1}\eta>1-\epsilon$ and $\pi M(0)^{t}\eta<\epsilon$ for $t\neq 2^N-1$. We construct a PFA $\mathcal{M}'=(S',\{0,1,2\},\pi',\{M'(\sigma)\}_{\sigma\in\{0,1,2\}},\eta')$ recognizing $L^{(N)}$  with probability $1-\epsilon$, where $M'(2)$ is an identity operator and the others are defined as follows:
\begin{itemize}
\item $S'=\{(s,q):s\in\{s_0,\cdots,s_{2N+1}\}, q\in S \}$, and for convenience, we denote the states in $S'$ as $\langle s|\langle q|$;
\item $\pi'=\langle s_0|\otimes \pi$;

\item $$\langle s_i|\langle q|M'(\sigma)=\langle s_{i+1}|\left(\langle q|M(0)^{\sigma 2^{N-1-i}}\right),$$
$$\langle s_{N+i}|\langle q|M'(\sigma)=\langle s_{N+i+1}|\left(\langle q|M(0)^{\sigma 2^{N-1-i}}\right),$$
$$\langle s_{2N}|\langle q|M'(\sigma)=\langle s_{2N+1}|\langle q|M'(\sigma)=\langle s_{2N+1}|\langle q|,$$
$i=0,\cdots, N-1$, $\sigma\in\{0,1\}$, $q\in S$;
\item $\eta'=(\sum\limits_{i=0}^{2N-1}|s_i\rangle)\otimes |e\rangle+|s_{2N}\rangle\otimes\eta$, where $|e\rangle$ represents a vector in $\mathbb{R}^{|S|}$ that all of its elements are 1.
\end{itemize}

For any input string $w\in\Sigma^*$, $\mathcal{M}'$ works as follows. According to the definition of $\eta'$, if $|w_{0,1}|<2N$, then the resulting state of $\mathcal{M}'$ computing $w$ is $\langle s_i|\langle q|$,  where $i<2N$ and $\langle q|$ is a superposition state. Hence $w$ will be accepted with probability 
$$\langle s_i|\langle q|\eta'=1.$$
If $|w_{0,1}|>2N$, then the state of $\mathcal{M}'$ is $\langle s_{2N+1}|\langle q|$, where $\langle q|$ is a superposition state. Hence the accepting probability of $w$ is 
$$\langle s_{2N+1}|\langle q|\eta'=0.$$
If $|w_{0,1}|=2N$, that is $w=x_1\cdots x_N y_1\cdots y_N$, then  we have 
\begin{align}
&\pi'M'(x_1)\cdots M'(x_N) M'(y_1)\cdots M'(y_N)\eta'\\
=&(\langle s_0|\otimes \pi) M'(x_1)\cdots M'(x_N) M'(y_1)\cdots M'(y_N)\eta'\\
=&\left( \langle s_{2N}| \pi M(0)^{\sum\limits_{i=1}^N x_i2^{N-i}+\sum\limits_{i=1}^N y_i2^{N-i}}\right)\eta' \\
=&\left(\langle s_{2N}| \pi M(0)^{\sum\limits_{i=1}^N x_i2^{N-i}+\sum\limits_{i=1}^N y_i2^{N-i}}\right)|s_{2N}\rangle\otimes \eta \\
=& \pi M(0)^{\sum\limits_{i=1}^N x_i2^{N-i}+\sum\limits_{i=1}^N y_i2^{N-i}}\eta .
\end{align} 
Hence, if $\sum\limits_{i=1}^N x_i2^{N-i}+\sum\limits_{i=1}^N y_i2^{N-i}=2^N-1$, $w$ will be accepted with probability greater than $1-\epsilon$, otherwise $w$ will be accepted with probability less than $\epsilon$.
\quad $\blacksquare$

\end{proof}






\begin{remark}

 So, for prefix-closured regular language $L^{(N)}$,  there exists a 1QFAC ${\cal M}$ having  $2N+2$ classical states and $\Theta(N)$ quantum basis states to  recognize $L^{(N)}$ with bounded error, and 
there exists a PFA   ${\cal M'}$ with $O(N^3)$ states recognizing 
$L^{(N)}$ with bounded error.  However, we still do not know the lower bound on the state complexity of PFA for recognizing  $L^{(N)}$. 
\quad $\blacksquare$

\end{remark}

\subsection{Supervisory Control of QDES simulated with cut-point languages}





 Let ${\cal M}$ be a 1QFAC, and
 let $K\subseteq L_{{\cal M}}^{\lambda}$ ($0\leq\lambda<1$) be the set of specifications we hope to achieve, where $K$ is also a regular language that is recognized by a 1QFAC ${\cal H}$ with  cut-point $\mu$ ($\mu\geq \lambda$) isolated by $\rho$. First, we give a sufficient condition such that there is a quantum supervisor controlling QDES ${\cal M}$ to approximate to $K$, and this is the first supervisory control theorem of QDES.

From now on,  a QDES associated with a 1QFAC ${\cal M}$  always has $L_{\cal M} (\epsilon)=1$, that is, the  initial  state is an accepting state.

\begin{theorem} \label{QSCT1} Let $\Sigma$ be a finite event set and $\Sigma=\Sigma_{uc}\cup \Sigma_{c}$.
Suppose a QDES with event set $\Sigma$ is modeled as $L_{\cal M}^{\lambda}$ for a 1QFAC ${\cal M}$ with $0\leq \lambda<1$. Let $K\subset \Sigma^*$  be recognized  by a 1QFAC ${\cal H}$ with cut-point $\mu$ ($\mu\geq \lambda$) isolated by $\rho$, where    $pr(K)\subseteq L_{\cal M}^{\lambda}$ and  $L_{\cal H}\leq L_{\cal M}$ (but $L_{\cal H} (x) = L_{\cal M}(x)$ for $x\in pr(K)$ ). If  $\forall s\in\Sigma^{\ast}$, $\forall  \sigma\in \Sigma_{uc}$,
\begin{equation}\label{CD}
\min\{L_{\cal H}(s), L_{\cal M}(s\sigma)\}\leq L_{\cal H}(s\sigma),
\end{equation}
then  there is a quantum supervisor ${\cal S}:  L_{\cal M}^{\lambda} \rightarrow [0,1]^\Sigma$
such that

\begin{equation}
L_{{\cal S}/{\cal M}}^{\mu}\subseteq K\subseteq L_{{\cal S}/{\cal M}}^{\lambda}.
\end{equation}

\end{theorem}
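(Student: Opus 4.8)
The plan is to construct an explicit quantum supervisor and then verify the two inclusions by short inductions on word length, with condition~(\ref{CD}) supplying exactly the inequality needed at uncontrollable events. Define ${\cal S}$ on $L_{\cal M}^{\lambda}$ (extended arbitrarily, say by the constant $1$, outside $L_{\cal M}^{\lambda}$; this does not affect the argument since $L_{{\cal S}/{\cal M}}\le L_{\cal M}$) by
\[
{\cal S}(s)(\sigma)=
\begin{cases}
L_{\cal H}(s\sigma), & \sigma\in\Sigma_{c},\\
1, & \sigma\in\Sigma_{uc}.
\end{cases}
\]
Since $L_{\cal M}(s\sigma)\le 1={\cal S}(s)(\sigma)$ for every $\sigma\in\Sigma_{uc}$ and every $s$, the quantum admissible condition~(\ref{admissible}) holds, so ${\cal S}$ is a legitimate quantum supervisor (one could equivalently take ${\cal S}(s)(\sigma)=L_{\cal M}(s\sigma)$ on $\Sigma_{uc}$).

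The main step is the estimate $L_{{\cal S}/{\cal M}}(s)\le L_{\cal H}(s)$ for all $s\in\Sigma^{*}$, proved by induction on $|s|$. For the base case, $\epsilon\in pr(K)$ (assuming $K\ne\emptyset$), so $L_{\cal H}(\epsilon)=L_{\cal M}(\epsilon)=1=L_{{\cal S}/{\cal M}}(\epsilon)$. For the inductive step use $L_{{\cal S}/{\cal M}}(s\sigma)=\min\{L_{{\cal S}/{\cal M}}(s),L_{\cal M}(s\sigma),{\cal S}(s)(\sigma)\}$: if $\sigma\in\Sigma_{c}$ then $L_{{\cal S}/{\cal M}}(s\sigma)\le{\cal S}(s)(\sigma)=L_{\cal H}(s\sigma)$ at once; if $\sigma\in\Sigma_{uc}$ then, using the induction hypothesis and ${\cal S}(s)(\sigma)=1$, $L_{{\cal S}/{\cal M}}(s\sigma)\le\min\{L_{\cal H}(s),L_{\cal M}(s\sigma)\}\le L_{\cal H}(s\sigma)$, where the last inequality is precisely condition~(\ref{CD}). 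Combining this estimate with the fact that ${\cal H}$ accepts $K$ with cut-point $\mu$ isolated by $\rho$ yields the first inclusion: if $s\notin K$ then $L_{{\cal S}/{\cal M}}(s)\le L_{\cal H}(s)\le\mu-\rho<\mu$, hence $s\notin L_{{\cal S}/{\cal M}}^{\mu}$; contrapositively $L_{{\cal S}/{\cal M}}^{\mu}\subseteq K$.

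For the second inclusion I would prove the stronger statement that $L_{{\cal S}/{\cal M}}(s)>\lambda$ for every $s\in pr(K)$, again by induction on $|s|$. The base case is $L_{{\cal S}/{\cal M}}(\epsilon)=1>\lambda$. If $s\sigma\in pr(K)$ then $s\in pr(K)$, so $L_{{\cal S}/{\cal M}}(s)>\lambda$ by the induction hypothesis; also $L_{\cal M}(s\sigma)>\lambda$ since $s\sigma\in pr(K)\subseteq L_{\cal M}^{\lambda}$; and ${\cal S}(s)(\sigma)>\lambda$ because it equals $1$ when $\sigma\in\Sigma_{uc}$ and equals $L_{\cal H}(s\sigma)=L_{\cal M}(s\sigma)>\lambda$ when $\sigma\in\Sigma_{c}$ (using the hypothesis $L_{\cal H}=L_{\cal M}$ on $pr(K)$). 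Hence the defining minimum for $L_{{\cal S}/{\cal M}}(s\sigma)$ exceeds $\lambda$. Thus $pr(K)\subseteq L_{{\cal S}/{\cal M}}^{\lambda}$, and since $K\subseteq pr(K)$ we get $K\subseteq L_{{\cal S}/{\cal M}}^{\lambda}$, which completes the argument.

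I expect the only genuinely delicate part to be the choice of ${\cal S}$ itself — disabling each controllable event $\sigma$ after $s$ down to the level $L_{\cal H}(s\sigma)$ while keeping uncontrollable events fully enabled — together with the observation that (\ref{CD}) is exactly the hypothesis propagating the bound $L_{{\cal S}/{\cal M}}\le L_{\cal H}$ across uncontrollable transitions. Everything else is routine bookkeeping, modulo the harmless conventions about extending ${\cal S}$ outside $L_{\cal M}^{\lambda}$ and the standing assumptions $K\ne\emptyset$ and $L_{\cal M}(\epsilon)=1$.
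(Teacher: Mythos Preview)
Your proof is correct and follows essentially the same approach as the paper: the same supervisor ${\cal S}$ (the paper uses $L_{\cal M}(s\sigma)$ rather than $1$ on $\Sigma_{uc}$, which you already note is equivalent) and the same induction driven by condition~(\ref{CD}). The only cosmetic difference is that the paper pushes the induction to the full equality $L_{{\cal S}/{\cal M}}=L_{\cal H}$ and leaves the final inclusions implicit, whereas you prove only $L_{{\cal S}/{\cal M}}\le L_{\cal H}$ and verify $K\subseteq L_{{\cal S}/{\cal M}}^{\lambda}$ by a separate induction on $pr(K)$ using the hypothesis $L_{\cal H}=L_{\cal M}$ there; your version is in fact more explicit about how the conclusion is extracted.
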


\begin{proof}   
     Let
 \begin{align}
     {\cal S}(s) (\sigma)=
          \begin{cases}
           L_{\cal M}(s\sigma),                 & \text{if }  \sigma\in\Sigma_{uc},\\
           L_{\cal H}(s\sigma),                 & \text{if } \sigma\in\Sigma_{c}.\\
          \end{cases}
        \end{align}

Recall $\forall s\in\Sigma^{\ast}$, $\forall \sigma\in\Sigma$,

 \begin{equation}
 L_{{\cal S}/{\cal M}}(s\sigma)=\min\{L_{{\cal S}/{\cal M}}(s), L_{\cal M}(s\sigma),{\cal S}(s)(\sigma)\}.
 \end{equation}

 Suppose that $L_{\cal H}(s)=L_{{\cal S}/{\cal M}}(s)$ for $|s|\leq n$. Then $\forall \sigma\in\Sigma$,

  \begin{align}
     & L_{{\cal S}/{\cal M}}(s\sigma) =\nonumber\\
          &\begin{cases}
          \min \{ L_{\cal H}(s), L_{\cal M}(s\sigma)\}\\
          \leq L_{\cal H}(s\sigma),          & \text{if }  \sigma\in\Sigma_{uc},\\
          \min\{ L_{\cal H}(s), L_{\cal M}(s\sigma), L_{\cal H}(s\sigma)\}\\
          \leq L_{\cal H}(s\sigma),         & \sigma\in\Sigma_{c}.\\
          \end{cases}
        \end{align}

 On the other hand,

  \begin{align}
      L_{\cal H}(s\sigma) & \leq \min\{L_{\cal M}(s\sigma), L_{\cal H}(s)\}\\
          &=\begin{cases}
          \min \{ L_{\cal M}(s\sigma), S(s)(\sigma), L_{{\cal S}/{\cal M}}(s)\}\\
          = L_{{\cal S}/{\cal M}}(s\sigma),          & \text{if }  \sigma\in\Sigma_{uc},\\
          \min\{ L_{\cal M}(s\sigma), L_{{\cal S}/{\cal M}}(s), L_{\cal H}(s\sigma)\}\\
        =L_{{\cal S}/{\cal M}}(s\sigma),         & \sigma\in\Sigma_{c}.\\
          \end{cases}
        \end{align}
\hskip 115mm $\blacksquare$


\end{proof}

\begin{remark}
Theorem \ref{QSCT1} shows that under certain conditions, there is a quantum supervisor to achieve an approximate objective specification.
Next we give a sufficient and necessary condition for the existence of quantum supervisor to achieve a precise supervisory control, and this is described by the following theorem. \quad $\blacksquare$
\end{remark}

\begin{theorem} \label{QSCT2}
Let $\Sigma$ be a finite event set and $\Sigma=\Sigma_{uc}\cup \Sigma_{c}$.  Suppose a QDES with event set $\Sigma$ is modeled as a quantum language $L_{\cal M}$ that is generated by an  1QFAC ${\cal M}$. Quantum language ${\cal K}$ generated by some 1QFAC satisfies $pr({\cal K})\subseteq L_{\cal M}$.
Then there is a quantum supervisor ${\cal S}:  \Sigma^{\ast} \rightarrow [0,1]^\Sigma$
such that
 $L_{{\cal S}/{\cal M}}=pr({\cal K})$, if and only if  $\forall s\in\Sigma^{\ast}$, $\forall  \sigma\in \Sigma_{uc}$,

\begin{equation}\label{CD}
\min\{pr({\cal K})(s), L_{\cal M}(s\sigma)\}\leq pr({\cal K})(s\sigma).
\end{equation}

\end{theorem}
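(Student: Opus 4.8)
The plan is to mirror the classical Ramadge--Wonham controllability argument in the $[0,1]$-valued (quantum-language) setting, with $\min$ playing the role of language intersection and $pr(\cdot)$ the role of prefix closure. Two elementary monotonicity facts about $pr({\cal K})$ will do most of the work: for every $s\in\Sigma^{\ast}$ and $\sigma\in\Sigma$ we have $pr({\cal K})(s\sigma)\le pr({\cal K})(s)$ (because $\{s\sigma t:t\in\Sigma^{\ast}\}\subseteq\{st:t\in\Sigma^{\ast}\}$, so the defining supremum for $s\sigma$ is over a subfamily of that for $s$), and $pr({\cal K})(s\sigma)\le L_{\cal M}(s\sigma)$ (from the hypothesis $pr({\cal K})\subseteq L_{\cal M}$). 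Combining these gives $pr({\cal K})(s\sigma)\le\min\{pr({\cal K})(s),L_{\cal M}(s\sigma)\}$, i.e. the reverse of the controllability inequality; the role of condition~(\ref{CD}) is exactly to force equality for uncontrollable $\sigma$.

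For necessity (``only if''), assume a quantum supervisor ${\cal S}$ exists with $L_{{\cal S}/{\cal M}}=pr({\cal K})$. Fix $s\in\Sigma^{\ast}$ and $\sigma\in\Sigma_{uc}$. By the defining recursion for $L_{{\cal S}/{\cal M}}$ and the quantum admissible condition~(\ref{admissible}) (which forces ${\cal S}(s)(\sigma)\ge L_{\cal M}(s\sigma)$), I get $pr({\cal K})(s\sigma)=L_{{\cal S}/{\cal M}}(s\sigma)=\min\{L_{{\cal S}/{\cal M}}(s),L_{\cal M}(s\sigma),{\cal S}(s)(\sigma)\}=\min\{pr({\cal K})(s),L_{\cal M}(s\sigma)\}$, so~(\ref{CD}) holds (in fact with equality).

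For sufficiency (``if''), I take the candidate supervisor ${\cal S}(s)(\sigma)=L_{\cal M}(s\sigma)$ when $\sigma\in\Sigma_{uc}$ and ${\cal S}(s)(\sigma)=pr({\cal K})(s\sigma)$ when $\sigma\in\Sigma_{c}$; this is $[0,1]$-valued and satisfies~(\ref{admissible}) (with equality) on $\Sigma_{uc}$, so it is a legitimate quantum supervisor. I then show $L_{{\cal S}/{\cal M}}(s)=pr({\cal K})(s)$ by induction on $|s|$. The base case $s=\epsilon$ uses the hard-wired value $L_{{\cal S}/{\cal M}}(\epsilon)=1$ together with $pr({\cal K})(\epsilon)=1$ (i.e. ${\cal K}$ is a normalized/nontrivial specification; absent that, the identity $L_{{\cal S}/{\cal M}}=pr({\cal K})$ must be read under the extra requirement $pr({\cal K})(\epsilon)=1$). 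For the inductive step, with $|s|=n$ and the claim known for lengths $\le n$, expand $L_{{\cal S}/{\cal M}}(s\sigma)=\min\{pr({\cal K})(s),L_{\cal M}(s\sigma),{\cal S}(s)(\sigma)\}$ and split on $\sigma$: if $\sigma\in\Sigma_{c}$ then ${\cal S}(s)(\sigma)=pr({\cal K})(s\sigma)$ and the two monotonicity facts collapse the $\min$ to $pr({\cal K})(s\sigma)$; if $\sigma\in\Sigma_{uc}$ then ${\cal S}(s)(\sigma)=L_{\cal M}(s\sigma)$ leaves $\min\{pr({\cal K})(s),L_{\cal M}(s\sigma)\}$, which is $\le pr({\cal K})(s\sigma)$ by~(\ref{CD}) and $\ge pr({\cal K})(s\sigma)$ by monotonicity, hence equal. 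This closes the induction and yields $L_{{\cal S}/{\cal M}}=pr({\cal K})$.

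I do not anticipate a genuine obstacle: every step reduces to manipulating $\min$ and suprema, and the argument is the natural $[0,1]$-valued transcription of the classical theorem. The only points that need care are (i) pinning down the base case, namely that $pr({\cal K})(\epsilon)=1$ so as to match $L_{{\cal S}/{\cal M}}(\epsilon)=1$, and (ii) being explicit that the ``$\ge$'' half of the inductive step comes from the monotonicity inequalities $pr({\cal K})(s\sigma)\le pr({\cal K})(s)$ and $pr({\cal K})(s\sigma)\le L_{\cal M}(s\sigma)$, since condition~(\ref{CD}) supplies only the ``$\le$'' half.
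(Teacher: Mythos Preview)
Your proposal is correct and follows essentially the same approach as the paper: the same candidate supervisor ${\cal S}(s)(\sigma)=L_{\cal M}(s\sigma)$ for $\sigma\in\Sigma_{uc}$ and ${\cal S}(s)(\sigma)=pr({\cal K})(s\sigma)$ for $\sigma\in\Sigma_{c}$, the same induction on $|s|$ for sufficiency, and the same use of the quantum admissible condition for necessity. If anything, you are more explicit than the paper about the two monotonicity facts $pr({\cal K})(s\sigma)\le pr({\cal K})(s)$ and $pr({\cal K})(s\sigma)\le L_{\cal M}(s\sigma)$ that supply the reverse inequality, and about the base case requirement $pr({\cal K})(\epsilon)=1$.
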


\begin{proof}   $\Leftarrow$\text{)}.     Let

 \begin{align}
     {\cal S}(s) (\sigma)=
          \begin{cases}
           L_{\cal M}(s\sigma),                 & \text{if }  \sigma\in\Sigma_{uc},\\
           pr({\cal K})(s\sigma),                 &  \text{if } \sigma\in\Sigma_{c}.\\
          \end{cases}
        \end{align}
First $L_{\cal M} (\epsilon)=1$ holds as we suppose the initial state is an accepting state. Recall $\forall s\in\Sigma^{\ast}$, $\forall \sigma\in\Sigma$,

 \begin{equation}
 L_{{\cal S}/{\cal M}}(s\sigma)=\min\{ L_{{\cal S}/{\cal M}}(s), L_{\cal M}(s\sigma),{\cal S}(s)(\sigma)\}.
 \end{equation}

 Suppose that $pr({\cal K})(s)= L_{{\cal S}/{\cal M}}(s)$ for $|s|\leq n$. Then $\forall \sigma\in\Sigma$,

  \begin{align}\label{Th1Eq1}
     & L_{{\cal S}/{\cal M}}(s\sigma) =\nonumber\\
          &\begin{cases}
          \min \{ pr({\cal K})(s), L_{\cal M}(s\sigma)\}\\
          \leq pr({\cal K})(s\sigma),          & \text{if }  \sigma\in\Sigma_{uc},\\
          \min\{ pr({\cal K})(s), L_{\cal M}(s\sigma), pr({\cal K})(s\sigma)\}\\
          \leq pr({\cal K})(s\sigma),         & \text{if } \sigma\in\Sigma_{c}.\\
          \end{cases}
        \end{align}

 On the other hand,

  \begin{align}\label{Th1Eq2}
     & pr({\cal K})(s\sigma) \\
     & \leq \min\{L_{\cal M}(s\sigma), pr({\cal K})(s)\}\\
          &=\begin{cases}
          \min \{ L_{\cal M}(s\sigma), {\cal S}(s)(\sigma),  L_{{\cal S}/{\cal M}}(s)\}\\
          = L_{{\cal S}/{\cal M}}(s\sigma),          & \text{if }  \sigma\in\Sigma_{uc},\\
          \min\{ L_{\cal M}(s\sigma), L_{{\cal S}/{\cal M}}(s), pr({\cal K})(s\sigma)\}\\
        =L_{{\cal S}/{\cal M}}(s\sigma),         & \text{if } \sigma\in\Sigma_{c}.\\
          \end{cases}
        \end{align}

 $\Rightarrow$).  Let quantum supervisor ${\cal S}:  \Sigma^{\ast}\rightarrow [0,1]^{\Sigma}$ satisfy that ${\cal S}(s)(\sigma)\geq L_{\cal M}(s\sigma)$ for any $s\in\Sigma^{\ast}$,  any $\sigma\in\Sigma_{uc}$.

  \begin{align}
     &\min \{pr({\cal K})(s),L_{\cal M}(s\sigma)\}\nonumber\\
 = & \min\{ L_{{\cal S}/{\cal M}}(s),    L_{\cal M}(s\sigma)\}\\
    =&\min\{ L_{{\cal S}/{\cal M}}(s),  L_{\cal M}(s\sigma), {\cal S}(s)(\sigma)\}\\
    =&  L_{{\cal S}/{\cal M}}(s\sigma)\\
    =& pr({\cal K})(s\sigma).
    \end{align}

So, we complete the proof of theorem. 
\quad $\blacksquare$
\end{proof}

\vskip 5mm
 From Theorem \ref{QSCT2} we can obtain a corollary, and this is a modelling fashion of QDES with cut-point.

\begin{corollary} \label{Co1}
Suppose a QDES is modeled as a cut-point language $L_{\cal M}^{\lambda}$ recognized by a  1QFAC ${\cal M}$ with $0\leq\lambda<1$. Quantum language ${\cal K}$ generated by some 1QFAC satisfies $pr({\cal K})\subseteq L_{\cal M}$.
Then there is a quantum supervisor ${\cal S}:  \Sigma^{\ast} \rightarrow [0,1]^\Sigma$
such that
 $L_{{\cal S}/{\cal M}}^{\lambda}=pr({\cal K})^{\lambda}$, if and only if  $\forall s\in pr({\cal K})^{\lambda}$, $\forall  \sigma\in \Sigma_{uc}$, if $s\sigma\in L_{\cal M} ^{\lambda}$, then $s\sigma\in pr({\cal K})^{\lambda}$.

\end{corollary}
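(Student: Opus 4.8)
The plan is to prove both directions directly, reusing the supervisor constructed in the proof of Theorem~\ref{QSCT2} but analyzing it only \emph{at the cut-point level}: the hypothesis here is strictly weaker than the controllability condition used in Theorem~\ref{QSCT2}, so we cannot hope for the quantum-language equality $L_{{\cal S}/{\cal M}}=pr({\cal K})$, only for the equality of the $\lambda$-cut languages. Two elementary observations will be used throughout. (i) $pr({\cal K})$ is itself prefix-closed, i.e.\ $pr({\cal K})(s\sigma)\le pr({\cal K})(s)$ for all $s,\sigma$, because $\{s\sigma t:t\in\Sigma^{\ast}\}\subseteq\{st':t'\in\Sigma^{\ast}\}$; hence the crisp language $pr({\cal K})^{\lambda}$ is prefix-closed. (ii) Since $pr({\cal K})(w)\le L_{\cal M}(w)$ for every $w$, we have $pr({\cal K})^{\lambda}\subseteq L_{\cal M}^{\lambda}$. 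I would also keep the standing normalization $pr({\cal K})(\epsilon)=1$ (the initial state is marked), so that $\epsilon\in pr({\cal K})^{\lambda}$, matching $L_{{\cal S}/{\cal M}}(\epsilon)=1$.

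For sufficiency ($\Leftarrow$), take ${\cal S}(s)(\sigma)=L_{\cal M}(s\sigma)$ when $\sigma\in\Sigma_{uc}$ and ${\cal S}(s)(\sigma)=pr({\cal K})(s\sigma)$ when $\sigma\in\Sigma_{c}$; this clearly satisfies the quantum admissible condition~(\ref{admissible}). I would then show by induction on $|s|$ that $s\in L_{{\cal S}/{\cal M}}^{\lambda}\iff s\in pr({\cal K})^{\lambda}$, tracking only membership in the respective cut-point languages, not exact acceptance probabilities. The base case is the normalization above. For the inductive step, from $L_{{\cal S}/{\cal M}}(s\sigma)=\min\{L_{{\cal S}/{\cal M}}(s),L_{\cal M}(s\sigma),{\cal S}(s)(\sigma)\}$ and the fact that a minimum exceeds $\lambda$ iff each argument does, one gets $s\sigma\in L_{{\cal S}/{\cal M}}^{\lambda}\iff s\in L_{{\cal S}/{\cal M}}^{\lambda}\wedge s\sigma\in L_{\cal M}^{\lambda}\wedge{\cal S}(s)(\sigma)>\lambda$, and the first conjunct becomes $s\in pr({\cal K})^{\lambda}$ by the induction hypothesis. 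If $\sigma\in\Sigma_{c}$, then ${\cal S}(s)(\sigma)>\lambda$ means exactly $s\sigma\in pr({\cal K})^{\lambda}$, which by (i)--(ii) already forces the other two conjuncts, so the two sides agree. If $\sigma\in\Sigma_{uc}$, then ${\cal S}(s)(\sigma)=L_{\cal M}(s\sigma)$, so the condition collapses to $s\in pr({\cal K})^{\lambda}\wedge s\sigma\in L_{\cal M}^{\lambda}$; this is implied by $s\sigma\in pr({\cal K})^{\lambda}$ via (i)--(ii), and conversely it implies $s\sigma\in pr({\cal K})^{\lambda}$ exactly by the controllability hypothesis of the corollary. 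This last implication is the single place where the hypothesis is invoked.

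For necessity ($\Rightarrow$), suppose some quantum supervisor ${\cal S}$ gives $L_{{\cal S}/{\cal M}}^{\lambda}=pr({\cal K})^{\lambda}$. Fix $s\in pr({\cal K})^{\lambda}=L_{{\cal S}/{\cal M}}^{\lambda}$ and $\sigma\in\Sigma_{uc}$ with $s\sigma\in L_{\cal M}^{\lambda}$. The admissible condition~(\ref{admissible}) yields ${\cal S}(s)(\sigma)\ge L_{\cal M}(s\sigma)>\lambda$, while $L_{{\cal S}/{\cal M}}(s)>\lambda$ by the choice of $s$; hence $L_{{\cal S}/{\cal M}}(s\sigma)=\min\{L_{{\cal S}/{\cal M}}(s),L_{\cal M}(s\sigma),{\cal S}(s)(\sigma)\}>\lambda$, i.e.\ $s\sigma\in L_{{\cal S}/{\cal M}}^{\lambda}=pr({\cal K})^{\lambda}$, which is precisely the asserted condition.

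I expect the only real friction to be bookkeeping in the inductive step: cleanly separating the implications that hold ``for free'' from prefix-closedness of $pr({\cal K})$ and from $pr({\cal K})\subseteq L_{\cal M}$ from the one implication that genuinely uses the controllability hypothesis, and phrasing the $\epsilon$-normalization so the base case is not vacuously broken (if ${\cal K}^{\lambda}=\emptyset$ then $pr({\cal K})^{\lambda}=\emptyset$ while $\epsilon$ always lies in $L_{{\cal S}/{\cal M}}^{\lambda}$, so the statement should be read modulo this degenerate case).
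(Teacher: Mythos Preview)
Your proposal is correct and follows essentially the same approach as the paper: the same supervisor ${\cal S}(s)(\sigma)=L_{\cal M}(s\sigma)$ for $\sigma\in\Sigma_{uc}$ and ${\cal S}(s)(\sigma)=pr({\cal K})(s\sigma)$ for $\sigma\in\Sigma_c$, with the necessity direction coming from the admissible condition exactly as in the paper. Your explicit induction at the cut-point level is in fact more careful than the paper's one-line reference to the inequalities~(\ref{Th1Eq1})--(\ref{Th1Eq2}) from Theorem~\ref{QSCT2}, since those were derived under the stronger quantum-language controllability hypothesis rather than its cut-point relaxation; your caveat about the degenerate $pr({\cal K})^{\lambda}=\emptyset$ case is likewise a point the paper leaves implicit.
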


\begin{proof}
 $\Leftarrow$).     Let
 \begin{align}
      {\cal S}(s) (\sigma)=
          \begin{cases}
           L_{\cal M}(s\sigma),                 & \text{if }  \sigma\in\Sigma_{uc},\\
           pr({\cal K})(s\sigma),                 &\text{if }  \sigma\in\Sigma_{c}.\\
          \end{cases}
        \end{align}

By means of Inequalities  \ref{Th1Eq1} and \ref{Th1Eq2} we can obtain that $s\in pr({\cal K})^{\lambda}$ if and only if $s\in L_{{\cal S}/{\cal M}}^{\lambda}$, for any $ s\in \Sigma^{\ast}$.

 $\Rightarrow$). Recall the supervisor ${\cal S}$ satisfies the quantum admissible condition (Eq. (\ref{admissible})): for any $s\in\Sigma^{\ast}$,  any $\sigma\in\Sigma_{uc}$, ${\cal S}(s)(\sigma)\geq L_{\cal M}(s\sigma)$.  Therefore, with the definition
 $L_{{\cal S}/{\cal M}}^{\lambda}$,  we have $s\sigma\in pr({\cal K})^{\lambda}$. \quad $\blacksquare$

\end{proof}

\subsection{An example to illustrate supervisory control theorems of QDES}

\textit{Theorem} \ref{QSCT2} and its \textit{Corollary} \ref{Co1} are the main supervisory control results of QDES.  As an application of \textit{Theorem} \ref{QSCT2} (or \textit{Corollary} \ref{Co1}), we present an example to show the advantage of QDES over classical DES in state complexity. 

\begin{example} We employ the language $L(m)$ in Example \ref{EGnew}.
 Let $\Sigma=\{0,1\}$,  $\Sigma_{uc}=\{0\}$, $\Sigma_{c}=\{1\}$. Given an natural number $m>2$, suppose a QDES modeled as the cut-point language \begin{equation}
L_{\cal M}^{0}=L(m)=\{w\in\{0,1\}^*:0^{km}1 \text{ is not a prefix of } w,\forall k\in \mathbb{N} \}
\end{equation}
 recognized by a 1QFAC $ {\cal M} $ with cut-point $0$, and quantum language ${\cal K}$ is generated by another 1QFAC ${\cal M}_{{\cal K}}$ that will be defined in the following.

 ${\cal M}$ is defined as \textit{Example} \ref{EGnew}, and  $L_{{\cal M}}(w)>0$ for any $w\in \Sigma^{\ast}$ iff   $w\in L(m)$.


For any given natural number $q>2$, we consider the language $L(mq)$ that is recognized by  another 1QFAC 
 $ {\cal M}_{\cal K}$ with cut-point $0$.  $ {\cal M}_{\cal K}$
can be defined by means of $ {\cal M}$, and ${\cal K}(w)=L_{{\cal M}_{\cal K}}(w)>0$ for any $w\in \Sigma^{\ast}$ iff   $w\in L(mq)$.

 Therefore, we have
 $L_ {\cal M}^{0}=L(m)$ and $pr({\cal K})^{0}={\cal K}^{0}=L(mq)$.

It is  immediate to check that the condition ``$\forall w\in pr({\cal K})^{0}$, $\forall  \sigma\in \Sigma_{uc}$, if $w\sigma\in L_{\cal M} ^{0}$, then $w\sigma\in pr({\cal K})^{0}$" in
 Corollary \ref{Co1} holds, so  there is a quantum supervisor $S:  \Sigma^{\ast} \rightarrow [0,1]^\Sigma$
such that
 $L_{{\cal S}/{\cal M}}^{0}=pr({\cal K})^{0}$.

Due to Theorem \ref{EGtheorem},
we know that PFA (i.e., PDES) require $\lceil\log_2m\rceil$ and $\lceil\log_2mq\rceil$ states to recognize the languages $L_{\cal M}^{0}$ and $pr({\cal K})^{0}$ with cut-point $0$, respectively. So, QDES show essential advantage over classical DES in state complexity for simulation of systems.


\end{example}
\hskip 123mm $\blacksquare$

\vskip 10mm

\subsection{Supervisory Control of QDES simulated with isolated  cut-point  languages}

In this section we study {\it nonblocking} problem in QDES. In this case, it is more suitsble to consider QDES  simulated with the languages (say $L$) of cut-point $\lambda$ isolated by an $\rho$, and the controlled language by quantum supervisor ${\cal S}$ is required to belong to this language $L$. Our main purpose is to establish {\it nonblocking} quantum supervisor theorem in QDES.

First, we would  recall {\it nonblocking} problem in classical DES \cite{book1, book2}. Let a DES modeled by finite automaton $G=(Q,\Sigma,\delta,q_{0},Q_{m})$. Then $G$ is called {\it nonblocking} if $pr(L_m(G))=L(G)$. That is, for any feasible input string $s$, input string $st$ will reach a marked state for some string $t$. Let $S$ be a supervisor. Then the language marked by $S/G$ is defined as:
\begin{equation}
L_{m}(S/G)=L(S/G)\cap L_{m}(G).
\end{equation}
The DES modeled by $L(S/G)$ is {\it nonblocking} if $L(S/G)= pr
(L_{m}(S/G)).$ 
In practice, $L(S/G)$ being {\it nonblocking} is important since it leads to each string in $L(S/G)$  together with some more input string being able to reach a marked state.

Suppose that  1QFAC ${\cal M}$ recognizes a language (denoted by $L_{\cal M}^{\lambda,\rho}$) over alphabet $\Sigma$ with cut-point $\lambda$ isolated by $\rho$. For any $s\in\Sigma^{\ast}$, denote

\begin{align}
          L_{{\cal M},a}(s)=
        \begin{cases}
           L_{\cal M}(s),
            & \text{if }   s\in L_{\cal M}^{\lambda,\rho},\\
        0,         & \text{otherwise},\\
          \end{cases}
 \end{align}

and

\begin{align}
          L_{{\cal S}/{\cal M},a}(s)=
        \begin{cases}
          \min \{ L_{\cal M}(s), L_{{\cal S}/{\cal M}}(s)\},
            & \text{if }  s\in L_{\cal M}^{\lambda,\rho},\\
        0,         & \text{otherwise}.\\
          \end{cases}
 \end{align}

 For any quantum language $L$ over $\Sigma$, denote by $\text{supp} (L)$ the support set of $L$, i.e., $\text{supp} (L)=\{s\in\Sigma^{\ast}: L(s)> 0\}$. A quantum supervisor ${\cal S}$ is called as \textit{nonblocking} if it satisfies $L_{{\cal S}/{\cal M}}=pr(L_{{\cal S}/{\cal M},a})$.

\begin{theorem} \label{MARKINGQSCT} Suppose a QDES is modeled as a language $L_{{\cal M},a}$  recognized by a 1QFAC ${\cal M}$ with cut-point $\lambda$ isolated by $\rho$. ${\cal K}$ is a quantum language over $\Sigma$.
Let $pr({\cal K})\leq L_{{\cal M},a}$. 
Then there is a quantum supervisor ${\cal S}$ satisfying nonblocking (i.e., $L_{{\cal S}/{\cal M}}=pr(L_{{\cal S}/{\cal M},a})$) such that ${\cal K}=L_{{\cal S}/{\cal M},a}$ and $pr({\cal K})=L_{{\cal S}/{\cal M}}$ if and only if

\begin{enumerate}
  \item $\forall s\in\Sigma^{\ast}$, $\forall \sigma\in\Sigma_{uc}$,
 \begin{align}
\min \{pr({\cal K})(s),L_{\cal M}(s\sigma)\}\leq pr({\cal K})(s\sigma);
\end{align}
  \item $\forall s\in\Sigma^{\ast}$,
  \begin{align}
{\cal K}(s)=\min\{pr({\cal K})(s),L_{{\cal M},a}(s)\}.
  \end{align}
\end{enumerate}

\end{theorem}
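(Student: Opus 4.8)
The plan is to reduce the statement to Theorem~\ref{QSCT2}: condition~1 here plays the role of the controllability-type hypothesis of that theorem, and condition~2 is the ``$L_{{\cal M},a}$-closedness'' requirement linking the marking language ${\cal K}$ with its prefix closure (the quantum counterpart of the classical relation $K=\overline{K}\cap L_m(G)$). Once $L_{{\cal S}/{\cal M}}=pr({\cal K})$ has been obtained from Theorem~\ref{QSCT2}, everything that remains is unwinding the definitions of $L_{{\cal M},a}$ and $L_{{\cal S}/{\cal M},a}$, the one delicate point being the case split on whether a string lies in $L_{\cal M}^{\lambda,\rho}$.

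For sufficiency ($\Leftarrow$) I would use the same supervisor as in the proof of Theorem~\ref{QSCT2},
\[
{\cal S}(s)(\sigma)=\begin{cases} L_{\cal M}(s\sigma), & \text{if }\sigma\in\Sigma_{uc},\\ pr({\cal K})(s\sigma), & \text{if }\sigma\in\Sigma_{c}.\end{cases}
\]
Because $pr({\cal K})\leq L_{{\cal M},a}\leq L_{\cal M}$, the hypothesis $pr({\cal K})\subseteq L_{\cal M}$ of Theorem~\ref{QSCT2} holds; condition~1 is exactly its controllability condition; and $L_{\cal M}(\epsilon)=1$ holds by the standing convention. Hence Theorem~\ref{QSCT2} gives $L_{{\cal S}/{\cal M}}=pr({\cal K})$, and in particular ${\cal S}$ obeys the quantum admissible condition~(\ref{admissible}). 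Next I would verify the elementary identity $L_{{\cal S}/{\cal M},a}(s)=\min\{L_{{\cal M},a}(s),L_{{\cal S}/{\cal M}}(s)\}$ for every $s$ by splitting on $s\in L_{\cal M}^{\lambda,\rho}$ (both sides equal $\min\{L_{\cal M}(s),L_{{\cal S}/{\cal M}}(s)\}$) and $s\notin L_{\cal M}^{\lambda,\rho}$ (both sides equal $0$). Combining this with $L_{{\cal S}/{\cal M}}=pr({\cal K})$ and condition~2 yields $L_{{\cal S}/{\cal M},a}(s)=\min\{L_{{\cal M},a}(s),pr({\cal K})(s)\}={\cal K}(s)$, i.e. $L_{{\cal S}/{\cal M},a}={\cal K}$; and then $pr(L_{{\cal S}/{\cal M},a})=pr({\cal K})=L_{{\cal S}/{\cal M}}$, which is precisely the nonblocking property, completing this direction.

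For necessity ($\Rightarrow$) I would start from a supervisor ${\cal S}$ with $L_{{\cal S}/{\cal M},a}={\cal K}$ and $L_{{\cal S}/{\cal M}}=pr({\cal K})$. Condition~2 then drops out immediately from the same identity $L_{{\cal S}/{\cal M},a}(s)=\min\{L_{{\cal M},a}(s),L_{{\cal S}/{\cal M}}(s)\}$ together with $L_{{\cal S}/{\cal M}}=pr({\cal K})$. For condition~1 I would rerun the argument from the $\Rightarrow$ part of Theorem~\ref{QSCT2}: since ${\cal S}$ is a quantum supervisor it satisfies ${\cal S}(s)(\sigma)\geq L_{\cal M}(s\sigma)$ for $\sigma\in\Sigma_{uc}$, so the recursion $L_{{\cal S}/{\cal M}}(s\sigma)=\min\{L_{{\cal S}/{\cal M}}(s),L_{\cal M}(s\sigma),{\cal S}(s)(\sigma)\}$ reduces to $\min\{pr({\cal K})(s),L_{\cal M}(s\sigma)\}=pr({\cal K})(s\sigma)$, which is in particular $\leq pr({\cal K})(s\sigma)$.

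I do not expect a deep obstacle beyond what Theorem~\ref{QSCT2} already handles; the genuinely new work is the bookkeeping for the marking function and the nonblocking clause. The point to check most carefully is the identity $L_{{\cal S}/{\cal M},a}(s)=\min\{L_{{\cal M},a}(s),L_{{\cal S}/{\cal M}}(s)\}$ and the base case of the induction behind $L_{{\cal S}/{\cal M}}=pr({\cal K})$ (which needs $L_{\cal M}(\epsilon)=1$ and $pr({\cal K})(\epsilon)=1$); note that $L_{\cal M}(\epsilon)=1$ gives $\epsilon\in L_{\cal M}^{\lambda,\rho}$ and hence $L_{{\cal M},a}(\epsilon)=1$, so the three languages also agree at $\epsilon$.
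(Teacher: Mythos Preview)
Your proposal is correct and follows essentially the same approach as the paper: the same supervisor ${\cal S}$ is used, the same induction establishes $L_{{\cal S}/{\cal M}}=pr({\cal K})$, and the same case split on $s\in L_{\cal M}^{\lambda,\rho}$ yields ${\cal K}=L_{{\cal S}/{\cal M},a}$ and condition~2. The only difference is organizational---you invoke Theorem~\ref{QSCT2} to avoid repeating the induction, whereas the paper reruns it explicitly---and your explicit statement of the identity $L_{{\cal S}/{\cal M},a}(s)=\min\{L_{{\cal M},a}(s),L_{{\cal S}/{\cal M}}(s)\}$ makes the bookkeeping a bit cleaner.
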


\begin{proof}
$\Leftarrow$).  $\forall s\in\Sigma^{\ast}$, let

\begin{align}
          {\cal S}(s)(\sigma)=
        \begin{cases}
                   L_{\cal M}(s\sigma), & \sigma\in\Sigma_{uc}, \\
                   pr({\cal K})(s\sigma), & \sigma\in\Sigma_{c}.\\
                 \end{cases}
\end{align}

First, $L_{{\cal S}/{\cal M}}(\epsilon)=1=pr({\cal K})(\epsilon)=\sup_{t\in\Sigma^{\ast}}{\cal K}(t)=1$ (${\cal K}(\epsilon)=1$). Suppose $s\in\Sigma^{\ast}$ and $|s|\leq n$, $L_{{\cal S}/{\cal M}}(s)=pr({\cal K})(s)$. Then $\forall \sigma\in\Sigma$,

(I) if $\sigma\in\Sigma_{uc}$, then

\begin{align}
  L_{{\cal S}/{\cal M}}(s\sigma) = & \min \{ L_{{\cal S}/{\cal M}}(s), L_{\cal M}(s\sigma),{\cal S}(s)(\sigma)\} \\
   =& \min\{L_{{\cal S}/{\cal M}}(s), L_{{\cal M}}(s\sigma)\}\\
   =& \min \{ pr({\cal K})(s),  L_{{\cal M}}(s\sigma)\}\\
   \leq & pr({\cal K})(s\sigma);
\end{align}

(II) if $\sigma\in\Sigma_{c}$, then it holds as well, since ${\cal S}(s)(\sigma)= pr({\cal K})(s\sigma)$.
On the other hand,
\begin{align}
 pr({\cal K})(s\sigma) \leq & \min \{ pr({\cal K})(s), L_{\cal M}(s\sigma)\} \\
   =&   \min\{pr({\cal K})(s),  L_{\cal M}(s\sigma), {\cal S}(s)(\sigma)\}  \\
   =& L_{{\cal S}/{\cal M}}(s\sigma).
\end{align}

So, $pr({\cal K})=L_{{\cal S}/{\cal M}}$.

For any $s\in\Sigma^{\ast}$, if $L_{\cal M}(s)<\lambda+\rho$, then $ L_{{\cal S}/{\cal M},a}(s)={\cal K}(s)=0$; if $L_{\cal M}(s)\geq\lambda+\rho$, then with condition 2) above, we have
\begin{align}
{\cal K}(s) = & \min \{ pr({\cal K})(s), L_{{\cal M},a}(s)\} \\
   =&   \min\{pr({\cal K})(s),  L_{\cal M}(s)\}  \\
   =& \min \{ L_{{\cal S}/{\cal M}}(s),  L_{{\cal M}}(s)\}\\
   =& L_{{\cal S}/{\cal M},a}(s).
\end{align}

$\Rightarrow$). 1) $\forall s\in\Sigma^{\ast}$, $\forall \sigma\in\Sigma_{uc}$, since quantum supervisor ${\cal S}$ always satisfies  that $L_{\cal M}(s\sigma)\leq {\cal S}(s)(\sigma)$, we have
\begin{align}
 & \min \{ pr({\cal K})(s), L_{{\cal M}}(s\sigma)\} \\
   \leq &   \min\{L_{{\cal S}/{\cal M}}(s), L_{\cal M}(s\sigma),{\cal S}(s)(\sigma)\}  \\
   =& L_{{\cal S}/{\cal M}}(s\sigma).
\end{align}

2)  $\forall s\in\Sigma^{\ast}$, if $L_{\cal M}(s)<\lambda+\rho$, then ${\cal K}(s)=0$ and $pr({\cal K})(s)=L_{{\cal M},a}(s)=0$; if $L_{\cal M}(s)\geq\lambda+\rho$, then
\begin{align}
{\cal K}(s) =& L_{{\cal S}/{\cal M},a}(s) \\
   =&   \min\{L_{{\cal S}/{\cal M}}(s), L_{\cal M}(s)\}  \\
   =&  \min\{pr({\cal K})(s), L_{{\cal M},a}(s)\}.
\end{align}

Consequently, the proof is completed. \quad $\blacksquare$

\end{proof}

As a special case of Theorem  \ref{MARKINGQSCT}, the following corollary follows.

\begin{corollary}  \label{MARKINGQSCTCOR}
Suppose a QDES is modeled as a language $L_{{\cal M}}^{\lambda,\rho}$ recognized by a 1QFAC ${\cal M}$ with cut point $\lambda$ isolated by $\rho$. $K\subset\Sigma^*$ is a regular language.
Let $pr(K)\subseteq L_{{\cal M}}^{\lambda,\rho}$. Then there is a quantum supervisor ${\cal S}$ satisfying $L_{{\cal S}/{\cal M}}=pr(L_{{\cal S}/{\cal M},a})$ such that $K=L_{{\cal S}/{\cal M},a}^{0}$ and $pr(K)=L_{{\cal S}/{\cal M}}^0$ if and only if the following two conditions hold:

\begin{enumerate}
  \item
  \begin{align}
pr(K) \Sigma_{uc} \cap  L_{\cal M}^0  \subseteq pr(K);
  \end{align}

  \item
  \begin{align}
K=pr(K)\cap L_{{\cal M}}^{\lambda,\rho},
  \end{align}
\end{enumerate}
where $pr(K) \Sigma_{uc}=\{s\sigma: s\in pr(K), \sigma\in\Sigma_{uc}\}$.

\end{corollary}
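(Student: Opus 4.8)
The plan is to derive this corollary as the support-level (``crisp'') shadow of Theorem~\ref{MARKINGQSCT}: pick a quantum specification $\mathcal{K}$ whose support is exactly $K$, verify the hypotheses and the two weighted conditions of Theorem~\ref{MARKINGQSCT} for this $\mathcal{K}$, and then pass to supports. Throughout I would use three elementary facts: $\mathrm{supp}(\min\{f,g\})=\mathrm{supp}(f)\cap\mathrm{supp}(g)$; $\mathrm{supp}(pr(\mathcal{K}))=pr(\mathrm{supp}(\mathcal{K}))$, since $pr(\mathcal{K})(s)>0$ iff $st\in\mathrm{supp}(\mathcal{K})$ for some $t$; and $\mathrm{supp}(L_{\mathcal{M},a})=L_{\mathcal{M}}^{\lambda,\rho}$, because $s\in L_{\mathcal{M}}^{\lambda,\rho}$ forces $L_{\mathcal{M}}(s)\ge\lambda+\rho>0$.

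For ($\Rightarrow$) I would set $\mathcal{K}:=L_{\mathcal{S}/\mathcal{M},a}$. Nonblocking gives $pr(\mathcal{K})=pr(L_{\mathcal{S}/\mathcal{M},a})=L_{\mathcal{S}/\mathcal{M}}$, and the hypotheses $K=L_{\mathcal{S}/\mathcal{M},a}^{0}$, $pr(K)=L_{\mathcal{S}/\mathcal{M}}^{0}$ say exactly $\mathrm{supp}(\mathcal{K})=K$ and $\mathrm{supp}(pr(\mathcal{K}))=pr(K)$; since $pr(K)\subseteq L_{\mathcal{M}}^{\lambda,\rho}$ and $L_{\mathcal{S}/\mathcal{M}}\le L_{\mathcal{M}}$, also $pr(\mathcal{K})\le L_{\mathcal{M},a}$. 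The ($\Rightarrow$) half of Theorem~\ref{MARKINGQSCT} then yields its conditions (1) and (2) for $\mathcal{K}$; taking supports in (1) gives $pr(K)\Sigma_{uc}\cap L_{\mathcal{M}}^{0}\subseteq pr(K)$, and taking supports in (2), i.e. in $\mathcal{K}(s)=\min\{pr(\mathcal{K})(s),L_{\mathcal{M},a}(s)\}$, gives $K=pr(K)\cap L_{\mathcal{M}}^{\lambda,\rho}$. This half is essentially bookkeeping once the support identities are in place.

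For ($\Leftarrow$), assuming the two set conditions, I would build $\mathcal{K}$ supported on $K$ and agreeing with the plant where it matters — the natural candidate being $\mathcal{K}(s)=L_{\mathcal{M}}(s)$ for $s\in K$ and $0$ otherwise, or, if needed to force the hypothesis $pr(\mathcal{K})\le L_{\mathcal{M},a}$, the prefix-minimum truncation $\mathcal{K}(s)=\big(\min_{s'\preceq s}L_{\mathcal{M}}(s')\big)\mathbf{1}[s\in K]$. Because every prefix of a word in $K$ lies in $pr(K)\subseteq L_{\mathcal{M}}^{\lambda,\rho}$, such an $\mathcal{K}$ is positive exactly on $K$, so $\mathrm{supp}(\mathcal{K})=K$ and $\mathrm{supp}(pr(\mathcal{K}))=pr(K)$. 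I would then check $pr(\mathcal{K})\le L_{\mathcal{M},a}$ and that set condition (1) together with $pr(K)\subseteq L_{\mathcal{M}}^{\lambda,\rho}$ implies condition (1) of Theorem~\ref{MARKINGQSCT}: for $s\in pr(K)$ and $\sigma\in\Sigma_{uc}$ with $L_{\mathcal{M}}(s\sigma)>0$, set condition (1) forces $s\sigma\in pr(K)\subseteq L_{\mathcal{M}}^{\lambda,\rho}$, so $L_{\mathcal{M},a}(s\sigma)=L_{\mathcal{M}}(s\sigma)$ and the weighted inequality reduces to comparing $L_{\mathcal{M}}$-values along $pr(K)$; set condition (2) is precisely the support form of condition (2) of the theorem. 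Applying Theorem~\ref{MARKINGQSCT} then gives a nonblocking $\mathcal{S}$ with $L_{\mathcal{S}/\mathcal{M},a}=\mathcal{K}$ and $L_{\mathcal{S}/\mathcal{M}}=pr(\mathcal{K})$, hence $L_{\mathcal{S}/\mathcal{M},a}^{0}=K$ and $L_{\mathcal{S}/\mathcal{M}}^{0}=pr(K)$.

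The main obstacle is this last verification in the ($\Leftarrow$) direction: choosing $\mathcal{K}$ so that simultaneously $\mathrm{supp}(\mathcal{K})=K$, $pr(\mathcal{K})\le L_{\mathcal{M},a}$, and the \emph{quantitative} condition (1) of Theorem~\ref{MARKINGQSCT} all hold, when one is handed only the \emph{set-theoretic} condition (1). Unlike the classical Ramadge--Wonham setting, a weighted controlled language must not merely remain inside $pr(K)$ but must also carry the correct weight at every step, so the argument has to exploit the isolation gap $\rho$ and the inclusion $pr(K)\subseteq L_{\mathcal{M}}^{\lambda,\rho}$ to bound the values $L_{\mathcal{M}}(st)$ for $st\in pr(K)$; everything else is routine manipulation of supports.
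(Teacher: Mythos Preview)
Your route differs from the paper's, which is a single sentence: take $\mathcal{K}$ to be the \emph{classical} (i.e.\ $\{0,1\}$-valued) characteristic function of $K$ and rerun the argument of Theorem~\ref{MARKINGQSCT}. Note that this is ``rerun the proof'', not ``apply the theorem as a black box'': with $\mathcal{K}=\mathbf{1}_K$ the standing hypothesis $pr(\mathcal{K})\le L_{\mathcal{M},a}$ of Theorem~\ref{MARKINGQSCT} fails in general (it would force $L_{\mathcal{M}}\equiv 1$ on $pr(K)$), so the paper is really pointing you to the support-level recursion, where conditions (1) and (2) of Theorem~\ref{MARKINGQSCT} and of the corollary literally coincide. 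Your plan, by contrast, manufactures a genuinely weighted $\mathcal{K}$ so that Theorem~\ref{MARKINGQSCT} can be invoked verbatim. Both are legitimate; the paper's is shorter, yours is more honest about what is being used.

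On your ``main obstacle'': it evaporates once you observe a fact that you did not record, namely that under the corollary's hypotheses $K$ is automatically prefix-closed. Indeed, condition~(2) says $K=pr(K)\cap L_{\mathcal{M}}^{\lambda,\rho}$, and the standing hypothesis $pr(K)\subseteq L_{\mathcal{M}}^{\lambda,\rho}$ makes the right-hand side equal to $pr(K)$; hence $K=pr(K)$. With this in hand your prefix-minimum choice $\mathcal{K}(s)=(\min_{s'\preceq s}L_{\mathcal{M}}(s'))\,\mathbf{1}[s\in K]$ satisfies $pr(\mathcal{K})=\mathcal{K}$ (take $t=\epsilon$ in the sup), the hypothesis $pr(\mathcal{K})\le L_{\mathcal{M},a}$ holds, $\mathcal{K}(\epsilon)=L_{\mathcal{M}}(\epsilon)=1$, and the quantitative condition~(1) of Theorem~\ref{MARKINGQSCT} becomes an \emph{equality}: for $s\in K$, $\sigma\in\Sigma_{uc}$ with $L_{\mathcal{M}}(s\sigma)>0$, set condition~(1) gives $s\sigma\in K$ and then $\mathcal{K}(s\sigma)=\min\{\mathcal{K}(s),L_{\mathcal{M}}(s\sigma)\}$ by definition. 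So no appeal to the isolation gap $\rho$ is needed; that suggestion was a red herring. Your $(\Rightarrow)$ direction is fine as written.
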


\begin{proof}
In fact, we can do it by taking ${\cal K}$ as a classical language in {\it Theorem}  \ref{MARKINGQSCT}.  So, we omit the details here. \quad $\blacksquare$




\end{proof}

\section{Decidability of Controllability Condition} \label{SECDCC}

In supervisory control of QDES, the  controllability conditions play an important role of the existence of quantum supervisors. So,
we present a polynomial-time algorithm  to decide the controllability condition Eq. (\ref{CD}). The prefix-closure of quantum language ${\cal K}$, as  the target language we hope to achieve under the supervisory control,  is in general generated by a 1QFAC ${\cal H}$,  that is, $pr({\cal K})=L_{{\cal H}}$. Then the controllability condition Eq. (\ref{CD}) is equivalently as:
$\forall s\in\Sigma^{\ast}$, $\forall  \sigma\in \Sigma_{uc}$,
\begin{equation}\label{CD1}
\min\{L_{\cal H}(s), L_{\cal M}(s\sigma)\}\leq L_{\cal H}(s\sigma).
\end{equation}

First, we need a proposition.
\begin{proposition}\label{EQ}

Suppose a QDES modeled as a quantum language $L_{\cal M}$ that is generated by a 1QFAC ${\cal M}$. Quantum language ${\cal K}$ satisfies $pr({\cal K})\subseteq L_{\cal M}$, and  $pr({\cal K})=L_{{\cal H}}$ for  some  1QFAC  ${\cal H}$.
Then  $L_{\cal H}(s)\geq L_{\cal H}(s\sigma)$ and $L_{\cal M}(s\sigma)\geq L_{\cal H}(s\sigma)$ for any  $s\in\Sigma^{\ast}$, and $ \sigma\in \Sigma_{uc}$.

\end{proposition}

\begin{proof} By the definition of  prefix closure, for any  $s\in\Sigma^{\ast}$, and $ \sigma\in \Sigma_{uc}$,  we have  $L_{\cal H}(s)=pr({\cal K})(s)=\sup_{t\in\Sigma^{\ast}}{\cal K}(st)$ and  $ L_{\cal H}(s\sigma)= pr({\cal K})(s\sigma)=\sup_{t\in\Sigma^{\ast}}{\cal K}(s\sigma t)$, which result in  $L_{\cal H}(s)\geq L_{\cal H}(s\sigma)$.

Furthermore, due to $pr({\cal K})\subseteq L_{\cal M}$ and $pr({\cal K})=L_{{\cal H}}$, we have $ L_{{\cal H}}  \subseteq L_{\cal M}$, that is, for any  $s\in\Sigma^{\ast}$, and $ \sigma\in \Sigma_{uc}$, $  L_{\cal H}(s\sigma)\leq    L_{\cal M}(s\sigma)$.

\end{proof}

In the light of Proposition \ref{EQ}, we have:

\begin{align}
& \min\{L_{\cal H}(s),L_{\cal M}(s\sigma)\}\leq L_{{\cal H}}(s\sigma) \\
  \Leftrightarrow &   \min\{L_{\cal H}(s),  L_{\cal M}(s\sigma)\}= L_{{\cal H}}(s\sigma) \\
    \Leftrightarrow &   \frac{L_H(s)+L_{\cal M}(s\sigma)-|L_{\cal H}(s)-L_{\cal M}(s\sigma)|}{2}= L_{{\cal H}}(s\sigma)\\
    \Leftrightarrow &   L_{\cal H}(s)+L_{\cal M}(s\sigma)-2L_{\cal H}(s\sigma)\\
    &= |L_{{\cal H}}(s)-L_{\cal M}(s\sigma)|\\
    \Leftrightarrow &   (L_{\cal H}(s)+L_{\cal M}(s\sigma)-2L_{\cal H}(s\sigma))^{2}\\
    &= (L_{{\cal H}}(s)-L_{\cal M}(s\sigma))^2\\
     \Leftrightarrow &   L_{\cal H}(s)L_{\cal M}(s\sigma)+L_{\cal H}(s\sigma)^2\\
     &=L_{\cal H}(s\sigma)L_{\cal H}(s)+L_{\cal H}(s\sigma)L_{\cal M}(s\sigma).  \label{ECD1}
\end{align}
So, Inequality (\ref{CD1}) is equivalent to Eq. (\ref{ECD1}), and therefore it suffices to check whether Eq. (\ref{ECD1}) holds for any $ s\in\Sigma^{\ast}$, and for each $ \sigma\in \Sigma_{uc}$, in order to check the controllability condition.

In fact, we have the following result, where $|Q_{\cal M}|$ and $|Q_{\cal H}|$  are the numbers of quantum basis states  of ${\cal M}$ and ${\cal H}$, respectively.

\begin{theorem}
Suppose a QDES modeled as a quantum language $L_{\cal M}$ generated by a 1QFAC ${\cal M}$. For quantum language ${\cal K}$, $pr({\cal K})$ is generated by another  1QFAC ${\cal H}$ and  $pr({\cal K})\subseteq L_{\cal M}$.
Then the controllability condition Eq. (\ref{CD}) holds if and only if for any $ \sigma\in \Sigma_{uc}$,
 for any $s\in\Sigma^{\ast}$ with $|s|\leq 18|Q_{\cal H}|^2(|Q_{\cal H}|^2+|Q_{\cal M}|^2)-1$, Eq. (\ref{CD1}) holds. Furthermore, there exists a polynomial-time algorithm running in
time $O(|\Sigma||Q_{\cal H}|^6(|Q_{\cal H}|^2+|Q_{\cal M}|^2)^3)$ that determines whether the controllability condition Eq. (\ref{CD}) holds.

\end{theorem}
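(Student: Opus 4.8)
The plan is to transcribe the proof of the preceding theorem (the 1QFAC case) essentially verbatim, the only change being that the passage from automata to real-valued bilinear machines now goes through Lemma~\ref{QFATOBLM} instead of Lemma~\ref{1QFACTOBLM}, so that an MM-QFA with $n$ quantum basis states yields an RBLM with $3n^2$ states in place of $(kn)^2$. First I would apply Lemma~\ref{QFATOBLM} to ${\cal M}$ and to ${\cal H}$, obtaining RBLM $M$ and $H$ with $3|Q_{\cal M}|^2$ and $3|Q_{\cal H}|^2$ states respectively such that $L_{\cal M}(s)=f_M(s)$ and $L_{\cal H}(s)=f_H(s)$ for every $s\in\Sigma^\ast$. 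Then, for each fixed $\sigma\in\Sigma_{uc}$, I would use Proposition~\ref{BLMTOBLM} to absorb the trailing symbol $\sigma$ into the output vector, producing RBLM $M_\sigma$ and $H_\sigma$ with the same state counts and with $L_{\cal M}(s\sigma)=f_{M_\sigma}(s)$ and $L_{\cal H}(s\sigma)=f_{H_\sigma}(s)$ for all $s\in\Sigma^\ast$.

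Using the monotonicity facts $L_{\cal H}(s)\geq L_{\cal H}(s\sigma)$ (since $L_{\cal H}=pr({\cal K})$) and $L_{\cal M}(s\sigma)\geq L_{\cal H}(s\sigma)$ (since $pr({\cal K})\subseteq L_{\cal M}$), the controllability condition Eq.~(\ref{CD1}) is, by the same chain of equivalences that precedes Eq.~(\ref{ECD1}), equivalent to the identity~(\ref{ECD1}) holding for all $s\in\Sigma^\ast$ and all $\sigma\in\Sigma_{uc}$. By Remark~\ref{TPRBLM} the four products in~(\ref{ECD1}) are the word functions $f_{H\otimes M_\sigma}$, $f_{H_\sigma\otimes H_\sigma}$, $f_{H_\sigma\otimes H}$ and $f_{H_\sigma\otimes M_\sigma}$, and grouping the two sides of~(\ref{ECD1}) by direct sum (Remark~\ref{DSRBLM}) rewrites it as the equivalence of the two RBLM $(H\otimes M_\sigma)\oplus(H_\sigma\otimes H_\sigma)$ and $(H_\sigma\otimes M_\sigma)\oplus(H_\sigma\otimes H)$.

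Both of these RBLM have $9|Q_{\cal H}|^2|Q_{\cal M}|^2+9|Q_{\cal H}|^4=9|Q_{\cal H}|^2(|Q_{\cal H}|^2+|Q_{\cal M}|^2)$ states, so Proposition~\ref{BLMEQ} furnishes at once both the word-length bound $18|Q_{\cal H}|^2(|Q_{\cal H}|^2+|Q_{\cal M}|^2)-1$ and a decision procedure for a single pair running in time $O\big((9|Q_{\cal H}|^2(|Q_{\cal H}|^2+|Q_{\cal M}|^2))^4\big)=O\big(|Q_{\cal H}|^8(|Q_{\cal H}|^2+|Q_{\cal M}|^2)^4\big)$; running this for each of the at most $|\Sigma|$ events of $\Sigma_{uc}$ gives the claimed overall running time, and Algorithm~\ref{alg:1} can be reused unchanged on the new RBLM. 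The one place that warrants genuine care — more bookkeeping than obstacle — is confirming that Lemma~\ref{QFATOBLM} has already folded the end-marker $\$$ into the output vector of the RBLM, so that after Proposition~\ref{BLMTOBLM} peels off the last real symbol $\sigma$ one has $f_{M_\sigma}(s)=L_{\cal M}(s\sigma)$ with no off-by-one in the $\$$-step, and tracking that the $3n^2$ translation (rather than $(kn)^2$) is exactly what produces the final constants $18$ and the $|Q_{\cal H}|^8$ factor; granting that, the rest is a routine copy of the 1QFAC argument.
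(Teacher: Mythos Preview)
Your proposal is correct and takes exactly the approach the paper intends: the paper does not write out this proof at all, stating only that ``the proof for the case of MM-QFA is similar'' to the 1QFAC theorem, and your sketch is precisely that similarity made explicit, with Lemma~\ref{QFATOBLM} replacing Lemma~\ref{1QFACTOBLM} and the $3n^{2}$ state count propagating through the tensor/direct-sum constructions to yield the constants $18$ and $|Q_{\cal H}|^{8}$. Your remark about the end-marker bookkeeping is the only point where the MM-QFA case requires a moment's extra thought beyond the 1QFAC argument, and you have identified it correctly.
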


\begin{proof}

According to Lemma \ref{1QFACTOBLM}, 1QFAC ${\cal M}$ and ${\cal H}$ can be simulated by two RBLM, say $M$ and $H$ respectively, such that $\forall s\in\Sigma^{\ast}$,
\begin{equation}
L_\mathcal{M}(s) = f_{M}(s),
\end{equation}
\begin{equation}
L_\mathcal{H}(s) = f_{H}(s),
\end{equation}
and the numbers of states in $M$ and $H$ are $ |S_{\cal M}|^2   |Q_{\cal M}|^2     $ and $|S_{\cal H}|^2 |Q_{\cal H}|^2        $, respectively,  where     $|S_{\cal M}|$ and $|S_{\cal H}|$ represent respectively the numbers of classical states in  ${\cal M}$ and ${\cal H}$,       $|Q_{\cal M}|$ and $|Q_{\cal H}|$ represent respectively the numbers of quantum states in  ${\cal M}$ and ${\cal H}$, and functions $f_M$ and $f_H$ are associated to $M$ and $H$, respectively.

Similarly, by virtue of Proposition \ref{BLMTOBLM} and Lemma \ref{1QFACTOBLM}, for each $\sigma\in \Sigma_{uc}$, there exist two RBLM $M_{\sigma}$ and $H_{\sigma}$ respectively satisfying that $\forall s\in\Sigma^{\ast}$,

\begin{enumerate}
\item \begin{equation}
L_{\cal M}(s\sigma) = f_{M_{\sigma}}(s),
\end{equation}
\item \begin{equation}
L_{\cal H}(s\sigma) = f_{H_{\sigma}}(s),
\end{equation}
\end{enumerate}
where the numbers of states in $M_{\sigma}$ and $H_{\sigma}$ are also $     |S_{\cal M}|^2       |Q_{\cal M}|^2      $ and $     |S_{\cal H}|^2        |Q_{\cal H}|^2     $, respectively. Therefore, $\forall s\in\Sigma^{\ast}$,

\begin{enumerate}
\item \begin{equation}
L_{\cal H}(s)L_{\cal M}(s\sigma) = f_{H}(s)f_{M_{\sigma}}(s)=f_{H\otimes M_{\sigma}}(s),
\end{equation}
\item
\begin{equation}
L_{\cal H}(s\sigma)^2 = f_{H_{\sigma}}(s)f_{H_{\sigma}}(s)=f_{H_{\sigma}\otimes H_{\sigma}}(s),
\end{equation}
\item
\begin{equation}
L_{\cal H}(s\sigma)L_{\cal H}(s) =f_{H_{\sigma}}(s)f_{H}(s)=f_{H_{\sigma}\otimes H}(s),
\end{equation}
\item
\begin{equation}
L_{\cal H}(s\sigma)L_{\cal M}(s\sigma) = f_{H_{\sigma}}(s)f_{M_{\sigma}}(s)=f_{H_{\sigma}\otimes M_{\sigma}}(s),
\end{equation}
\end{enumerate}
where the second equalities of each equations above are due to  Remark \ref{TPRBLM}. Therefore, equation (\ref{ECD1}) is equivalent to
\begin{equation}
f_{H\otimes M_{\sigma}}(s)+f_{H_{\sigma}\otimes H_{\sigma}}(s)=f_{H_{\sigma}\otimes M_{\sigma}}(s)+f_{H_{\sigma}\otimes H}(s)
\end{equation}
for every $ s\in\Sigma^{\ast}$. Furthermore, by means of Remark \ref{TPRBLM}, we have
\begin{equation} \label{ECD2}
f_{(H\otimes M_{\sigma})\oplus (H_{\sigma}\otimes H_{\sigma})}(s)=f_{(H_{\sigma}\otimes M_{\sigma})\oplus (H_{\sigma}\otimes H)}(s)
\end{equation}
for every $s\in\Sigma^{\ast}$, where the state numbers of $(H\otimes M_{\sigma})\oplus (H_{\sigma}\otimes H_{\sigma})$ and $(H_{\sigma}\otimes M_{\sigma})\oplus (H_{\sigma}\otimes H)$ are the same as
\begin{align}
&      |S_{\cal H}|^2|Q_{\cal H}|^2      |S_{\cal M}|^2|Q_{\cal M}|^2    +     |S_{\cal H}|^4\       |Q_{\cal H}|^4\\
=&      |S_{\cal H}|^2|Q_{\cal H}|^2       (   |S_{\cal M}|^2|Q_{\cal M}|^2+     |S_{\cal H}|^2|Q_{\cal H}|^2  ).
\end{align}

By virtue of Proposition \ref{BLMEQ}, the above Equation  (\ref{ECD2}) holds for every $s\in\Sigma^{\ast}$ if and only if it holds for all $s\in\Sigma^{\ast}$ with
$|s|\leq    2  |S_{\cal H}|^2|Q_{\cal H}|^2       (   |S_{\cal M}|^2|Q_{\cal M}|^2+     |S_{\cal H}|^2|Q_{\cal H}|^2  )   $, and there  exists a polynomial-time algorithm running in
time $O(|S_{\cal H}|^6|Q_{\cal H}|^6      (   |S_{\cal M}|^2|Q_{\cal M}|^2+     |S_{\cal H}|^2|Q_{\cal H}|^2  ) ^3$ to determine whether  Equation  (\ref{ECD2}) holds for every $s\in\Sigma^{\ast}$. Here we present the algorithm in detail, but omit the analyses of correctness and complexity and the details are referred to \cite{QLZMG11,LQ08,CMR2006,KMOWW2011,LF2015,WLY2021}.

In the first step,  given 1QFAC ${\cal M}$ and ${\cal H}$, and for any $\sigma\in\Sigma_{uc}$,   we can directly construct  two RBLM $(H\otimes M_{\sigma})\oplus (H_{\sigma}\otimes H_{\sigma})$ and $(H_{\sigma}\otimes M_{\sigma})\oplus (H_{\sigma}\otimes H)$ as above, and for simplicity,  we denote them respectively by
\begin{itemize}
\item ${\cal M}_1(\sigma)=(S_1, \pi_1, \{M_1(t)\}_{t\in\Sigma},\eta_1)$,
\item ${\cal M}_2(\sigma)=(S_2, \pi_2, \{M_2(t)\}_{t\in\Sigma},\eta_2)$.
\end{itemize}

 Recalling Definition \ref{BLM} and Remark \ref{TPRBLM}, we have
 \begin{equation}
 f_{{\cal M}_i(\sigma)}(w)=\eta_i M_i(w_m)M_i(w_{m-1})\ldots M_i(w_1)\pi_i,
 \end{equation}
 $i=1,2,$
 their direct sum  is
  \begin{align}
  &{\cal M}_1(\sigma)\oplus {\cal M}_2(\sigma)\\
  =&(S_1\oplus S_2, \pi_1\oplus \pi_2, \{M_1(t)\oplus M_2(t)\}_{t\in\Sigma},\eta_1\oplus \eta_2),
   \end{align}
  and then
   \begin{equation}
f_{\mathcal{M}_1\oplus \mathcal{M}_2}(w)=f_{\mathcal{M}_1}(w) + f_{\mathcal{M}_2}(w)
 \end{equation}
for any $w\in\Sigma^*$. For any $w=w_1w_2\ldots w_m\in \Sigma^{\ast}$,
denote
 \begin{align}
P_{{\cal M}_i(\sigma)}(w)
= M_i(w_m)M_i(w_{m-1})\ldots M_i(w_1)\pi_i,
 \end{align}
 for $i=1,2.$



Now we present Algorithm \ref{alg:1} to check whether or not $M_1(\sigma)$ and $M_2(\sigma)$ are equivalent.

So, if for any $ \sigma\in \Sigma_{uc}$, Algorithm \ref{alg:1} returns $M_1(\sigma)$ and $M_2(\sigma)$ are equivalent, then the controllability condition holds; otherwise the controllability condition does not hold.

\begin{algorithm}
	\renewcommand{\algorithmicrequire}{\textbf{Input:}}
	\caption{Algorithm for checking the equivalence between ${\cal M}_1(\sigma)$ and ${\cal M}_2(\sigma)$}
	\label{alg:1}
	\begin{algorithmic}[1]
		\REQUIRE RBLM ${\cal M}_1(\sigma)=(S_1, \pi_1, \{M_1(t)\}_{t\in\Sigma},\eta_1)$ and ${\cal M}_2(\sigma)=(S_2, \pi_2, \{M_2(t)\}_{t\in\Sigma},\eta_2)$;
		\STATE Set $\textbf{V}$ and $\textbf{N}$ to be the empty set;
		\STATE queue $\leftarrow$  node($\varepsilon$); $\varepsilon$ denotes empty string;
		\WHILE{queue is not empty}
             \STATE \textbf{begin} take an element node($x$) from queue; $x\in\Sigma^*$;
             \IF {$P_{{\cal M}_1(\sigma)\oplus {\cal M}_2(\sigma)}(x)\notin span(\textbf{V})$}
		\STATE \textbf{begin} add all node($xt$) for $t\in\Sigma$ to queue;
             \STATE \qquad\quad add vector $P_{{\cal M}_1(\sigma)\oplus {\cal M}_2(\sigma)}(x)$ to $\textbf{V}$;
             \STATE \qquad\quad add node($x$) to $\textbf{N}$;
             \STATE \textbf{end};
             \ENDIF;
             \STATE \textbf{end};
             \ENDWHILE;
		\IF {$\forall\textbf{v}\in \textbf{V}$, $(\eta_{1}\oplus -\eta_{2})\textbf{v}=0$}
             \STATE return(yes)
             \ELSE
             \STATE return an $x_0\in \{x:node(x)\in\textbf{N}, (\eta_{1}\oplus -\eta_{2})P_{{\cal M}_1(\sigma)\oplus {\cal M}_2(\sigma)}(x)\neq0\}$;
             \ENDIF;
	\end{algorithmic}
\end{algorithm}

\end{proof}

\section{Concluding Remarks} \label{SECCR}

As a kind of important control systems, DES have been developed deeply  \cite{book1}-\cite{Qiu05} due to the  potential of practical application, but state complexity is still a key problem in DES to be solved appropriately. In recent thirty years, quantum computing has been studied rapidly \cite{NC00}, and quantum control has attracted great interest \cite{WM10}.  So, initiating the study of QDES likely becomes a new subject of DES, and it is also motivated by two aspects: one is the simulation of DES in quantum systems by virtue of the principle of quantum computing; another  is that QDES have advantages over classical DES for processing some problems in state complexity. This paper has been the first study for establishing QDES in light of QFA.

In this paper, we have established a basic framework of QDES, and the supervisory control of QDES has been studied. The main contributions are: (1) We have used 1QFAC to simulate QDES, and proved that MO-1QFA are not suitable for modeling QDES since we found MO-1QFA cannot recognize any prefix-closured language even with cut-point. (2) We have established a number of supervisory control theorems of QDES and proved the sufficient and necessary conditions of the existence of quantum supervisors. (3) We have constructed a number of examples to illustrate the supervisory control theorems and these examples have also showed the advantages of QDES over classical DES concerning state complexity. (4) We have given a polynomial-time algorithm to determine whether or not the controllability condition holds.

In subsequent study, we would like to consider controllability and observability problem of QDES under partial observation of events, and decentralized control of QDES with multi-supervisors under partial observation of events, as well as  diagnosability of QDES. In particular, we would like to discover more practical examples to illustrate the advantages of QDES over classical DES in state complexity.

\section*{Acknowledgements} 
 I thank Ligang Xiao for useful discussion and  helpful construction concerning the examples of QFA recognizing prefix-closured languages.
 This work is supported in part by the National
Natural Science Foundation of China (No. 61876195).




\end{document}